\documentclass[final,3p,times,twocolumn]{elsarticle}

\usepackage[utf8]{inputenc}
\usepackage{xspace}
\newcommand{\spin}{SPIN\xspace}
\newcommand{\rev}[1]{#1}   % revision markup disabled for submission
\usepackage{amsmath,amssymb,amsthm}
\usepackage{url}
\usepackage{graphicx}
\usepackage{booktabs}
\usepackage{enumitem}
\usepackage{xcolor}
\newcommand{\fix}[1]{#1}  % round-2 corrections; set to \textcolor{red}{#1} to re-enable highlighting
\newcommand{\new}[1]{\textcolor{black}{#1}}  % round-3 markup
\setlist[itemize]{itemsep=1pt, topsep=2pt, parsep=0pt, partopsep=0pt}

\newtheorem{proposition}{Proposition}
\newtheorem{lemma}{Lemma}
\newtheorem{assumption}{Assumption}
\newtheorem{remark}{Remark}
\newtheorem{definition}{Definition}

\journal{Journal of Process Control}

\begin{document}

\begin{frontmatter}

\title{Model-Free PID Tuning by Step-Response Inspection}

\author[inst1]{Daniel Pachner}
\author[inst1]{Pavel Otta\corref{cor1}}
\ead{pavel.otta@cvut.cz}
\author[inst1]{Ji\v{r}\'{\i} Dost\'al}
\author[inst2]{Vladim\'{\i}r Havlena}

\cortext[cor1]{Corresponding author}

\affiliation[inst1]{organization={University Centre for
Energy Efficient Buildings (UCEEB), Czech Technical University in
Prague}, city={Buštěhrad}, country={Czech Republic}}
\affiliation[inst2]{organization={Department of Control Engineering,
Faculty of Electrical Engineering, Czech Technical University in Prague},
city={Prague}, country={Czech Republic}}

\begin{abstract}
Industrial PID loops are still tuned by hand. The engineer steps the
setpoint, looks at the response, and changes a gain. This paper turns
that procedure into an algorithm, \new{\spin, for Step-response
Phase-portrait INspection}. From a single closed-loop step response we
form three phase portraits of the control deviation, and for each we
count how many times the trajectory winds around its settling point.
The count has no units, and it does not depend on the size of the step
or on the clock. It shows in which frequency band the loop is
under-damped: that of the integral, the proportional, or the derivative
channel. A triangular rule reads the three counts. It reduces the gain
of the one band whose fault is certain, and it raises the gain of every
band that the same test has shown to be within its limit. No model is
identified and no optimization is solved. The iteration is
deterministic, and its gains stay inside a fixed box. Cutting the gain of
a band that rings does reduce its count, because a ringing count is
itself the evidence that the loop is in the regime where this holds. The
iteration therefore settles into a cycle at the edge of the well-damped
set. It needs no excitation beyond a
routine setpoint step, so it can run on a process in operation, and it
may start from arbitrary gains, including destabilizing ones.
\new{\spin} assumes a stable, self-regulating process whose gain and
phase both fall with frequency, \fix{which covers the lag-plus-dead-time
models common in process control,} and a controller from the nested
family
$\mathrm{I} \subset \mathrm{PI} \subset \mathrm{PID}$ with integral
action present. We validate it on a battery of process-typical plants,
using one fixed step size and one set of dimensionless constants
throughout. Every decision is visible: an operator who watches the
portraits sees what the algorithm sees. \new{A reference implementation
and an interactive demonstration are public.}
\end{abstract}

\begin{keyword}
PID control \sep controller tuning \sep data-driven control \sep
phase plane \sep step response
\end{keyword}

\end{frontmatter}

%=====================================================================
\section{Introduction}
%=====================================================================

More than eighty years after Ziegler and Nichols \cite{ZN1942}, PID
remains the dominant industrial controller
\cite{AstromHagglund2001}: in a survey of more than eleven thousand
controllers, $97\%$ of the regulatory loops used PID
\cite{Desborough2002}. Most PID
loops are still tuned by inspection: an engineer steps the setpoint,
looks at the response, and adjusts a gain by judgment.

Three lines of work replace this inspection.
1)~Model-based rules use an identified low-order model instead
\cite{ZN1942,CohenCoon1953,RiveraIMC,Skogestad2003}, and therefore
inherit the errors of that model; both experimental parts of
Ziegler--Nichols have since been revisited with modern robustness tools
\cite{AstromHagglund2004,Schlegel2002,SchlegelCech2005}.
2)~Data-driven schemes replace the inspection with a criterion, and
with the machinery needed to serve that criterion: a dedicated relay
experiment \cite{AstromHagglund1984}, gradient experiments on a
quadratic cost \cite{HjalmarssonIFT}, an identification-like problem
\cite{CampiVRFT}, or online perturbation of the gains
\cite{KillingsworthKrstic}.
\fix{3)~Parametric-region methods compute the admissible set directly
in the plane of two controller parameters, every setting in that set
meeting an $H_\infty$ specification
\cite{BrabecSchlegel2023,Ho2003}. The result is a region with a
certificate rather than a single tuning, but it needs a plant model, a
weighting function and a level fixed in advance, and it treats two
parameters at a time.}

The closest precedent automated the inspection but did not formalize
it: the pattern-recognition autotuner of \cite{KrausMyron1984}, sold
industrially as EXACT, reads overshoot and decay ratio from detected
peaks and adjusts the gains by heuristic logic. The inspection itself
has never been made a well-defined quantity: which gain is at fault,
given this response, computed rather than matched to a stored
pattern.

The inspection differs from all of these in the kind of quantity it
reads. Every diagnostic named above is metric: a ratio of amplitudes,
an integrated cost, or an identified parameter with units. Each of them
therefore needs a scale, and that scale must be supplied for every loop
by a model, by a normalization, or by an expert. The turn index reads the oriented shape of the error trajectory
instead: how many times that trajectory wraps around its settling point.
Being a count of turns, it is topological rather than metric, and so
independent of units, of amplitude and of the clock, and, in the fast-sampling limit, of the sampling period
(Proposition~\ref{prop:invariance}). This is why the manual procedure
is blind to scale. It is also why one fixed set of dimensionless
limits serves every plant in this paper.

This paper formalizes the inspection. \new{We call the resulting
method \spin, for Step-response Phase-portrait INspection. The name
stands for the complete procedure: the three portraits, the guarded
turn indices computed from them, and the triangular rule that acts on
the three counts.} The contributions are the following.
(i)~A triple of phase portraits of the control deviation, whose turn
indices form a damping diagnostic that has no units, needs no scale,
selects one frequency band, and is topological
(Section~\ref{sec:index}).
(ii)~A well-posedness analysis of the index, containing a
counterexample in which a naively windowed count diverges and a
settling-anchored guard that restores independence of the window
(Section~\ref{sec:wellposed}).
(iii)~A single triangular tuning rule in band-ordered gain coordinates,
read as loop shaping band by band. The gains stay bounded without
further conditions, and convergence to a bounded neighbourhood of the
boundary follows from a monotonicity property that the plant class
supplies wherever the rule corrects a gain
(Proposition~\ref{prop:mono}, Section~\ref{sec:rule}).
(iv)~Validation on a battery of process-typical plants, with the same
default parameters throughout (Section~\ref{sec:battery}).

\new{\spin} returns a fast, well-damped tuning, which is what an expert
tuner also offers, and asks for one closed-loop setpoint step per
iteration and nothing else: no model, no relay experiment, no solver. It
can be audited in full, since an operator watching the three portraits
sees what the algorithm sees, and it is governed throughout by the fixed
dimensionless defaults of Table~\ref{tab:defaults}, none of which was
adjusted for any plant.

Every move either repairs a diagnosed fault or, when no band objects,
raises all gains by one bounded step, so no signal is ever injected
only to acquire information, as it is in iterative feedback tuning or in
extremum seeking \cite{HjalmarssonIFT,KillingsworthKrstic}. The
expanding move does push the loop towards its stability boundary, one
step at a time, and the stability screen of
Section~\ref{sec:wellposed} is what keeps that push bounded.

Figures~\ref{fig:init_step} and~\ref{fig:init_portraits} show the
method on one plant (P2, Section~\ref{sec:battery}) before any
mathematics: the response and the portraits before and after tuning,
from a severely mis-tuned start, recorded from the running
implementation (Data availability). The run begins at unit gains
$\new{(K_i, K_p, K_d)} = (1, 1, 1)$ with the derivative on the
measurement, and the P2 preset of the interactive demonstration
reproduces it. The battery of Section~\ref{sec:battery} is a separate
set of runs, starting from a mis-tuned AMIGO design with the derivative
on the error.

\begin{figure*}[t]
\centering
\includegraphics[width=0.47\textwidth]{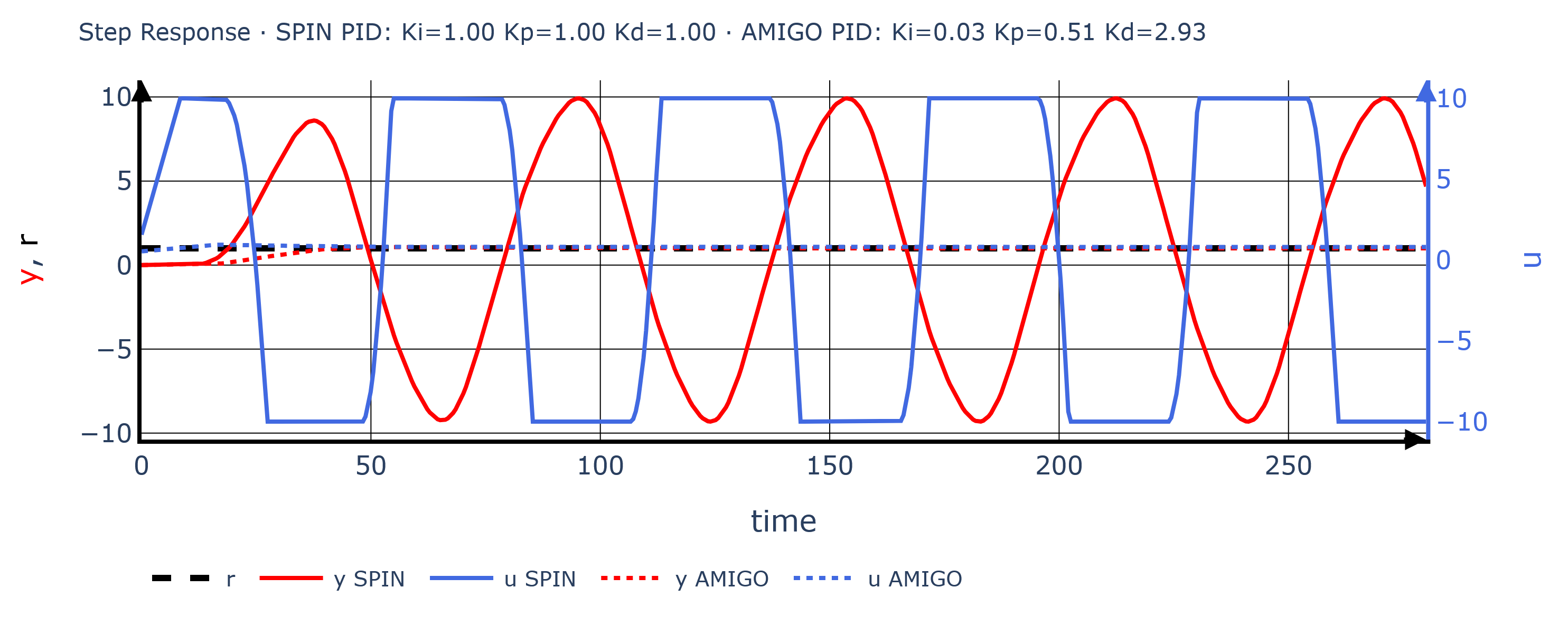}\hfill
\includegraphics[width=0.47\textwidth]{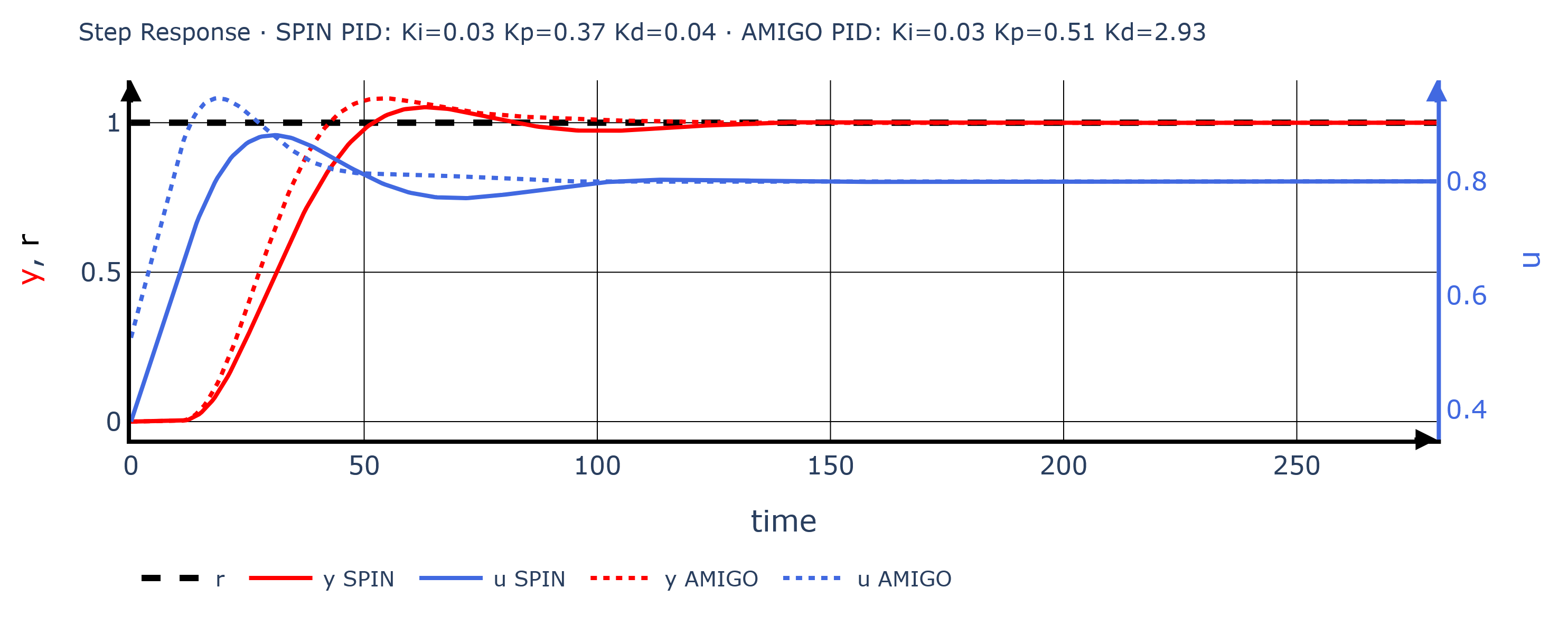}
\caption{Step response before (left) and after (right) tuning; P2 from
unit gains. The mis-tuned loop sustains an oscillation, and the
actuator swings between its limits. Nothing in the left plot shows
which gain is at fault. At the end of the run a single overshoot
settles cleanly, and the actuator stays unsaturated.}
\label{fig:init_step}
\end{figure*}

\begin{figure*}[t]
\centering
\includegraphics[width=0.22\textwidth]{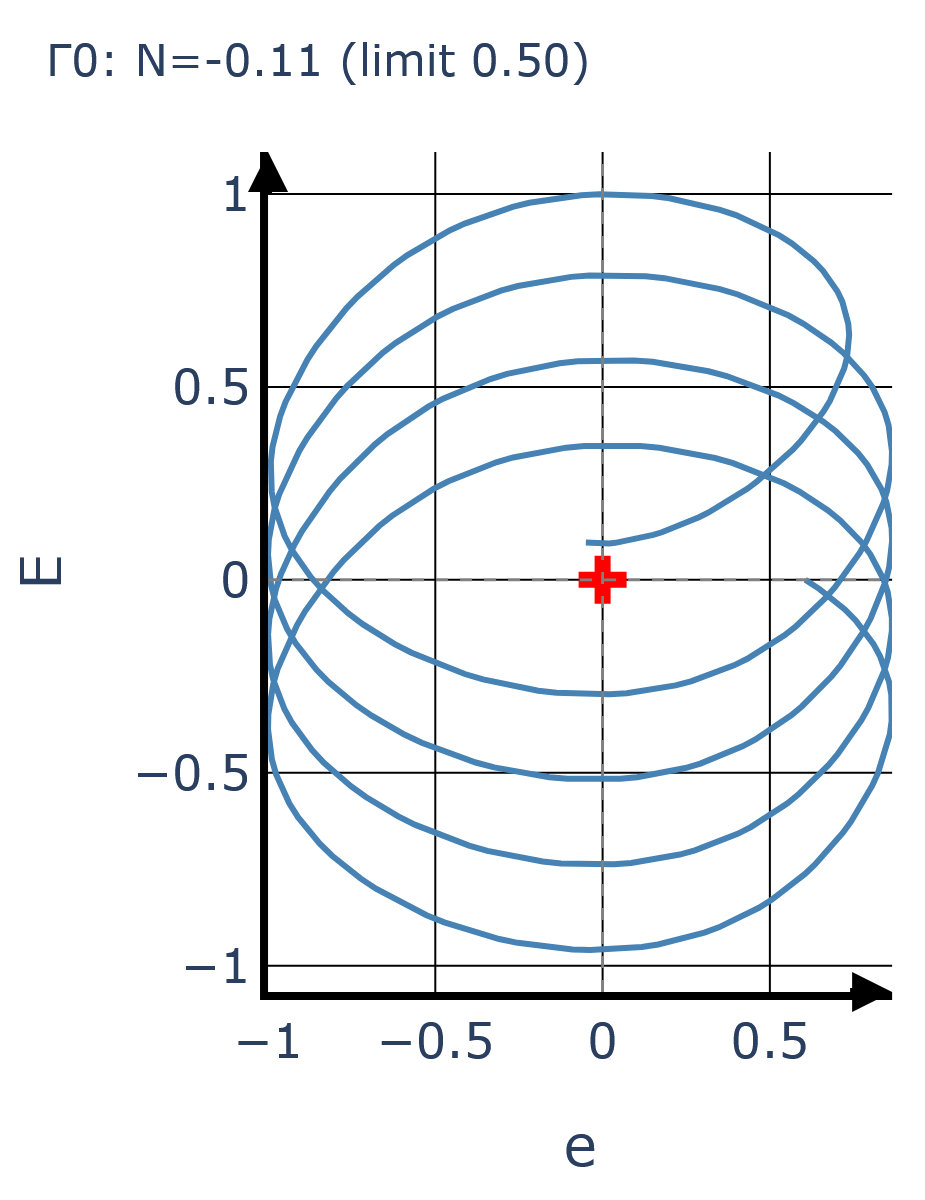}\hfill
\includegraphics[width=0.22\textwidth]{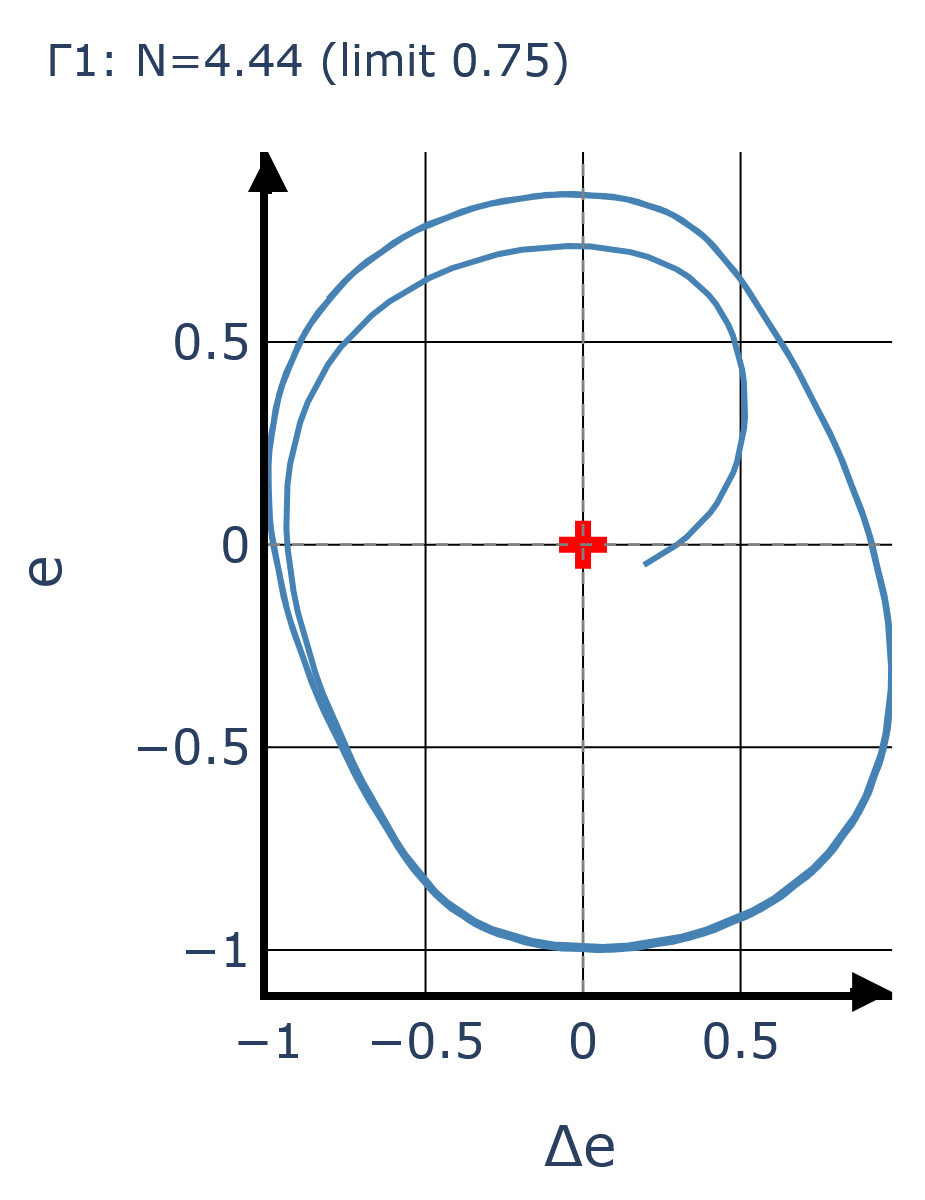}\hfill
\includegraphics[width=0.22\textwidth]{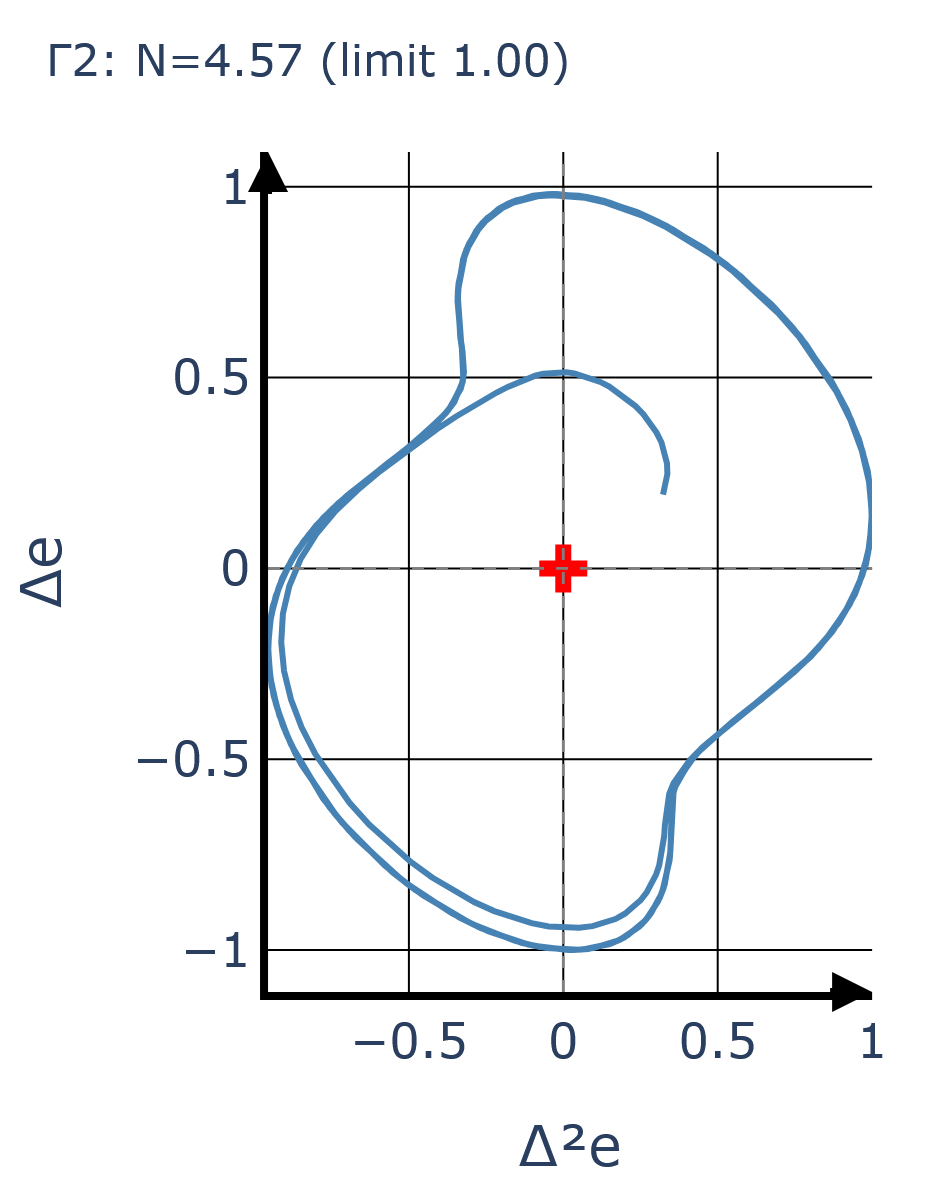}\\[2pt]
\includegraphics[width=0.22\textwidth]{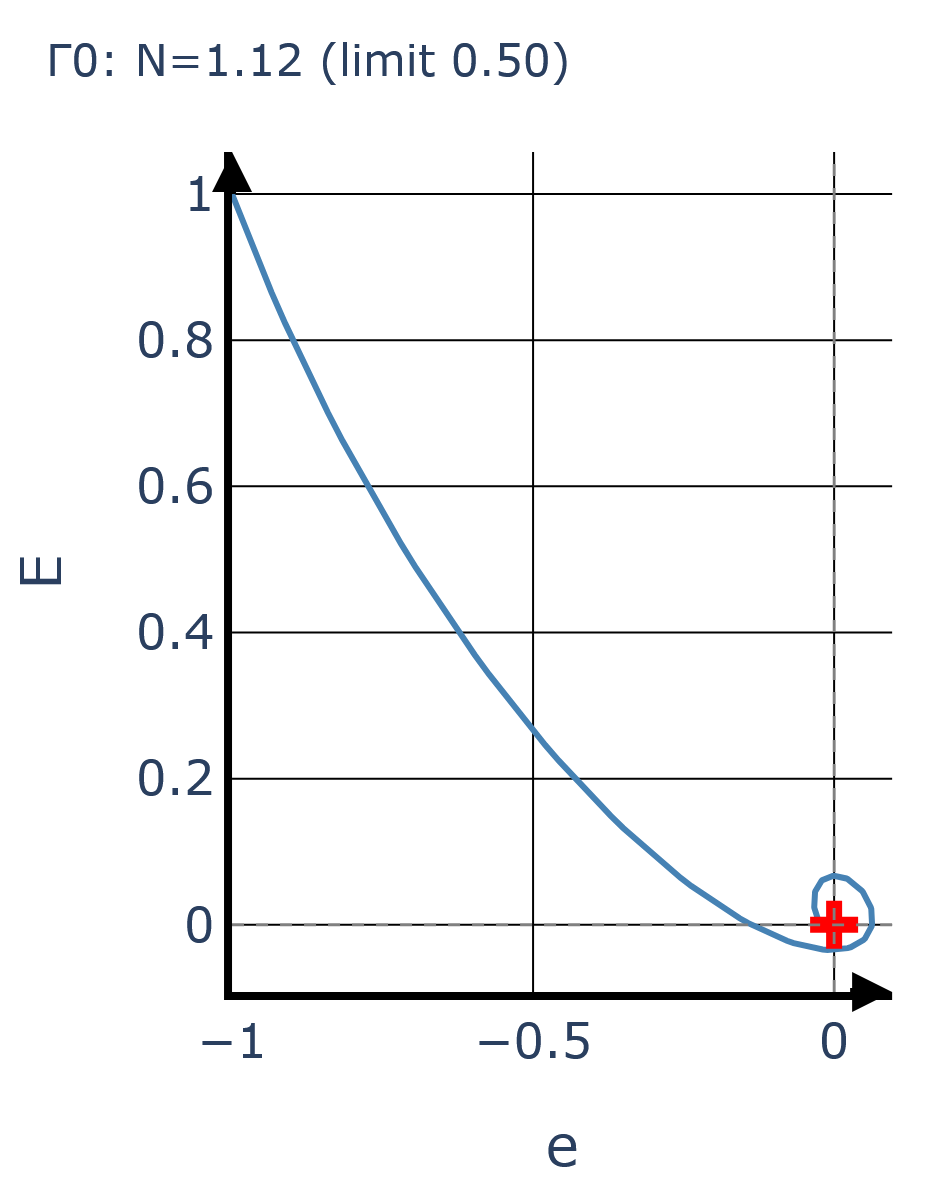}\hfill
\includegraphics[width=0.22\textwidth]{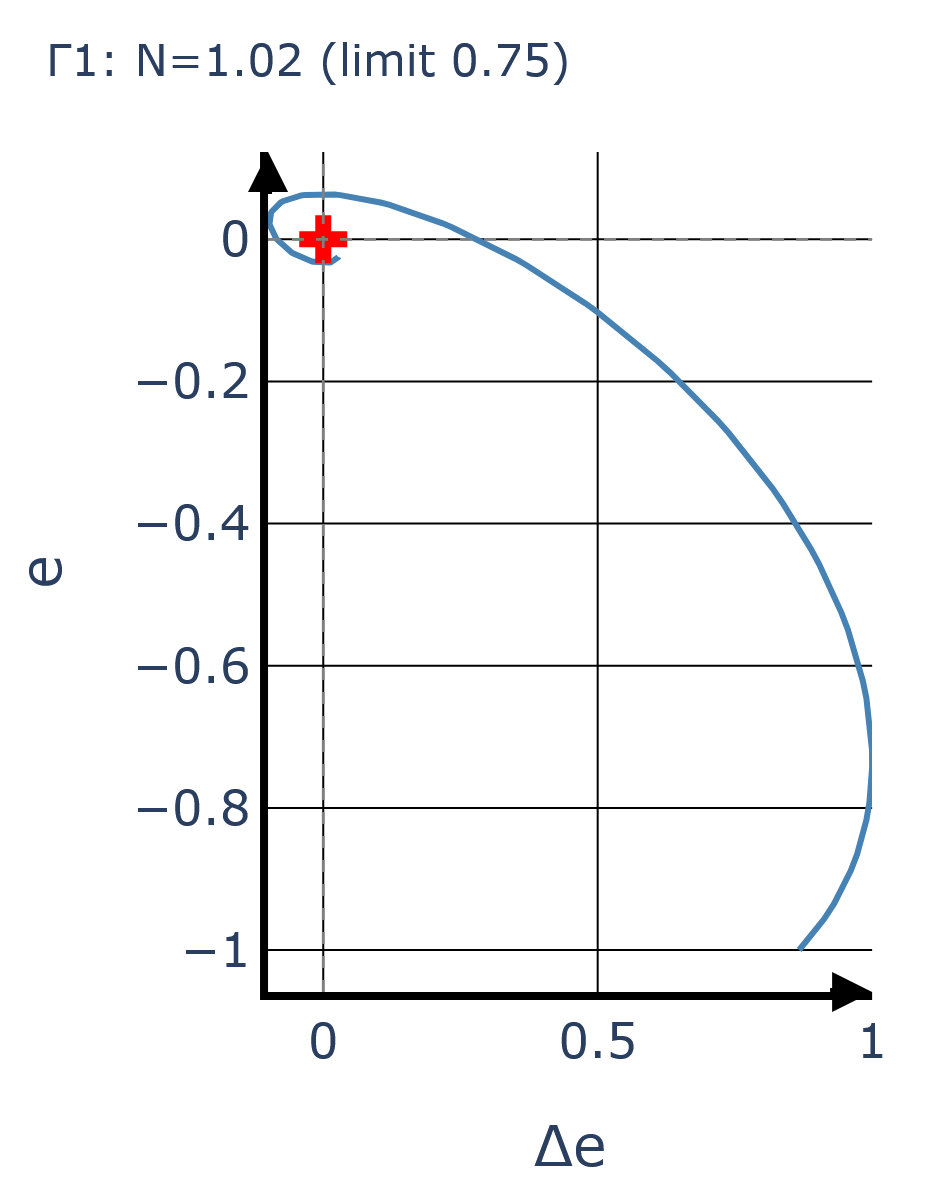}\hfill
\includegraphics[width=0.22\textwidth]{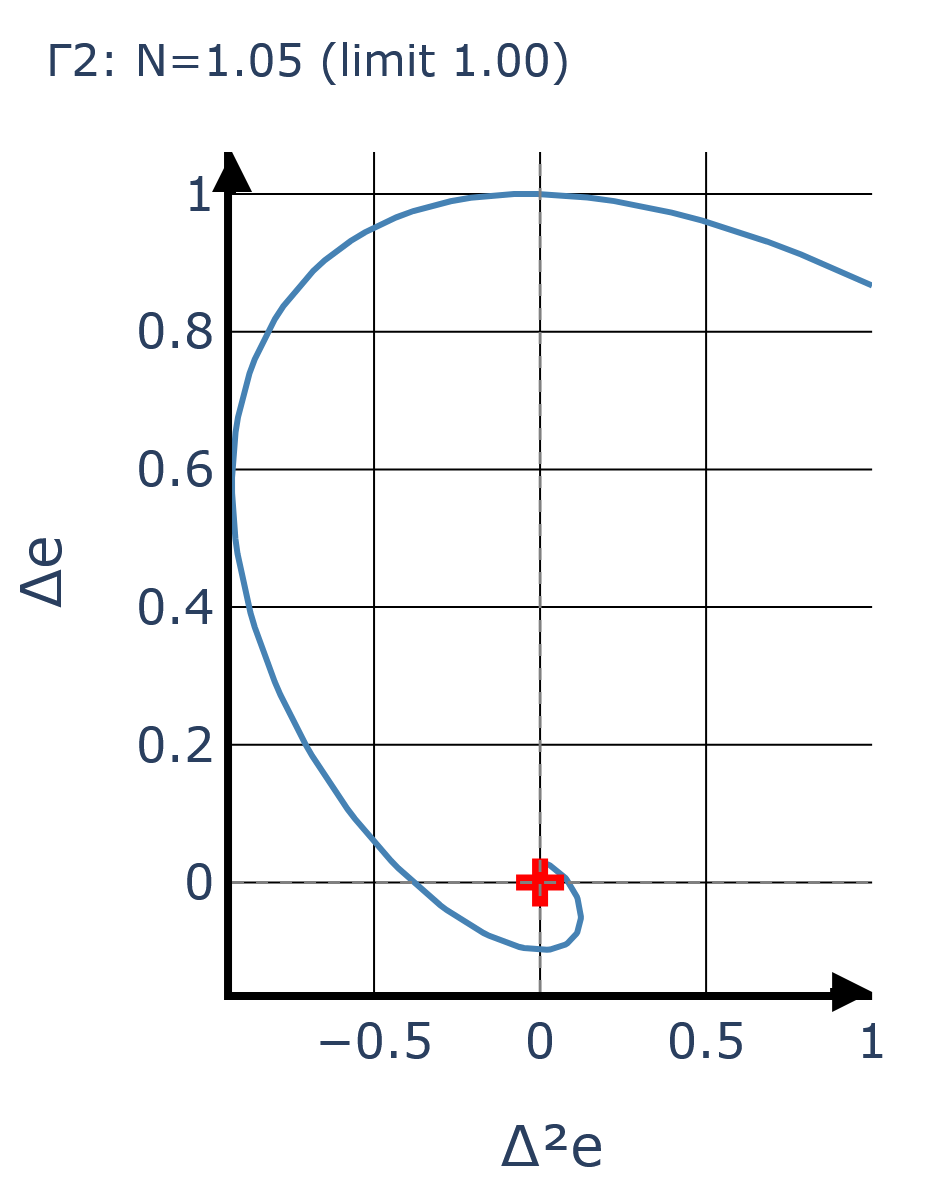}
\caption{Portraits $\Gamma_0,\Gamma_1,\Gamma_2$ (left to right) before
(top) and at the end of the run (bottom); the cross marks the settling
point. Figure~\ref{fig:init_step} does not show which gain is at fault, but
the portraits do. At unit gains the two upper bands are far
outside their limits ($N_1{=}4.44$ against $0.75$; $N_2{=}4.57$ against
$1.0$), while $N_0{=}{-}0.11$ stays inside its limit; the stability screen of
Section~\ref{sec:wellposed} rejects that record anyway. By the end of the run
the counts have fallen to $(1.12, 1.02, 1.05)$, and the last feasible
iterate is $81$ iterations earlier, at $(0.12, 0.10, 0.35)$: the
boundary limit cycle of Proposition~\ref{prop:conv}, seen from its
violated side.}
\label{fig:init_portraits}
\end{figure*}

%=====================================================================
\section{The turn index}
\label{sec:index}
%=====================================================================

\new{\spin} rests on two standing assumptions, stated once and used
throughout.

\begin{assumption}[Process class]\label{ass:plant}
\fix{The plant $G$ is SISO, asymptotically stable and non-integrating
(self-regulating). On $(0,\infty)$ both $|G|$ and $\arg G$ decrease
strictly, and $\arg G$ falls below $-3\pi/2$.}
\end{assumption}

\fix{The last clause is an arithmetic of phase. To destabilize a loop
the total phase must reach $-\pi$, and the controller of
Assumption~\ref{ass:pid} supplies at most $\pi/2$ of lead, so the plant
itself must be able to reach $-\pi-\pi/2$. This is what places the
stability boundary at a finite gain. The other clauses are used as well: asymptotic
stability lets the Nyquist argument of Proposition~\ref{prop:pair} count
encirclements; self-regulation, that is $|G(0)| < \infty$, fixes the sign
in Proposition~\ref{prop:mono}(i), so that reducing all gains damps the
loop rather than destabilizing it; and the two monotonicity clauses make
a single crossover carry the loop. Membership in this class can be
decided by inspection.}
\begin{lemma}[Lag chains with dead time]\label{lem:class}
Let $G(s) = K\,e^{-Ls}\prod_{i=1}^{n}(\tau_i s+1)^{-1}$ be a chain of
$n \ge 1$ first-order lags, with static gain $K>0$, lag time constants
$\tau_i>0$ and dead time $L\ge 0$. Then $|G|$ and $\arg G$ decrease
strictly on $(0,\infty)$, and $\arg G$ falls below $-3\pi/2$ if and only
if $L>0$, or $L=0$ and $n\ge 4$. Such a plant satisfies
Assumption~\ref{ass:plant} exactly when $L>0$ or $n\ge 4$.
\end{lemma}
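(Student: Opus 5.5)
We will work directly with the explicit frequency response $G(j\omega) = K e^{-jL\omega}\prod_{i}(j\tau_i\omega+1)^{-1}$ on $\omega\in(0,\infty)$, using the continuous branch of the argument that starts at $0$ at $\omega=0^+$, since $K>0$ makes $G(0)=K$ real and positive. The modulus and the phase then split into sums of independent terms:
\begin{equation*}
|G(j\omega)| = K\prod_{i=1}^{n}\bigl(1+\tau_i^2\omega^2\bigr)^{-1/2},\qquad
\arg G(j\omega) = -L\omega-\sum_{i=1}^{n}\arctan(\tau_i\omega).
\end{equation*}

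\textbf{Step 1: strict monotonicity.} Each factor $(1+\tau_i^2\omega^2)^{-1/2}$ is positive and strictly decreasing on $(0,\infty)$ because $\tau_i>0$. A product of positive strictly decreasing functions is strictly decreasing, and $n\ge1$ guarantees at least one such factor. So $|G|$ decreases strictly.

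For the phase, each $-\arctan(\tau_i\omega)$ is strictly decreasing, and $-L\omega$ is non-increasing because $L\ge0$. The sum is therefore strictly decreasing.

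\textbf{Step 2: the $-3\pi/2$ clause.} Because the phase is continuous and strictly decreasing, "falls below $-3\pi/2$" holds exactly when $\inf_{\omega>0}\arg G<-3\pi/2$. This infimum equals the limit as $\omega\to\infty$.

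If $L>0$, the term $-L\omega\to-\infty$ while the arctan sum stays in $(-n\pi/2,0)$. The phase is unbounded below and the clause holds.

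If $L=0$, the phase equals $-\sum_i\arctan(\tau_i\omega)$. It lies strictly above $-n\pi/2$ for every finite $\omega$ and tends to $-n\pi/2$. For $n\ge4$ the limit is $\le-2\pi<-3\pi/2$, so by continuity some finite $\omega$ attains a phase below $-3\pi/2$. For $n\le3$ the phase stays strictly above $-n\pi/2\ge-3\pi/2$ and never falls below. This settles the "if and only if".

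\textbf{Step 3: membership in Assumption~\ref{ass:plant}.} $G$ is SISO by construction. Its only poles are at $-1/\tau_i<0$, and the delay adds no poles, so $G$ is asymptotically stable. It is non-integrating because $|G(0)|=K<\infty$. Together with Steps 1–2, Assumption~\ref{ass:plant} holds exactly when $L>0$ or ($L=0$ and $n\ge4$), which simplifies to "$L>0$ or $n\ge4$".

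The only delicate point is the phase-branch convention. "$\arg G$" must mean the continuous unwrapped phase anchored at $0$ at $\omega=0^+$, which is what "falls below $-3\pi/2$" presupposes. The proof will state this choice explicitly. Once it is fixed, the boundary case $n=3$, $L=0$ is handled by the strict inequality $\arctan<\pi/2$, which is the one step requiring care.
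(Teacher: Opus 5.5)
Your proof is correct and follows the same route as the paper's: write the modulus and phase as sums (or products) of termwise monotone pieces, then read the $-3\pi/2$ clause off the large-frequency behaviour $-L\omega - n\pi/2$, which diverges for $L>0$ and is approached from above when $L=0$. Yours is simply more explicit: it fixes the unwrapped phase branch, notes that $-L\omega$ is only non-increasing when $L=0$, and verifies the stability and self-regulation clauses of Assumption~\ref{ass:plant} that the paper leaves implicit.
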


\begin{proof}
\fix{Both derivatives are sums of strictly negative terms for
$\omega>0$, and $\arg G \sim -L\omega - n\pi/2$ as $\omega\to\infty$,
which diverges for $L>0$ and otherwise approaches $-n\pi/2$ from
above.}
\end{proof}

\fix{Models with dead time therefore qualify for every $L>0$, and
delay-free lag chains from four lags up. A lightly damped complex pole
pair, an integrating plant and a plant with finite zeros fall outside.
The assumption can also be read from a measured frequency response
\cite{SchlegelCech2005,Schlegel2002}.}
\new{Proposition~\ref{prop:pair} additionally needs $|CG| \to 0$, so
that the phase crossings die out, and how many lags that takes depends
on the controller rather than on Assumption~\ref{ass:plant}. For the I
and PI members $|C|$ is bounded at high frequency, so $n \ge 1$
suffices. The PID member with an ideal derivative has
$|C| \sim K_d\omega$, which cancels one lag, so it needs $n \ge 2$: with
a single lag $|CG|$ tends to the constant $K K_d/\tau_1$ and the
crossings recur without decaying.}

\begin{remark}[Plants with no stability boundary]\label{rem:ladder}
The exclusion easiest to overlook is the mildest-looking one: a
delay-free chain with $n \le 2$, and in particular $K/(\tau s + 1)$.
There $\arg G > -\pi$, the locus of $CG$ never reaches the negative real
axis, and no finite gain destabilizes the loop. The boundary of
Proposition~\ref{prop:pair} does not merely lie far away; it does not
exist. No band ever objects, the
last row of Table~\ref{tab:rule} fires at every iteration, and the gains
rise to $F_{\max}$, which Section~\ref{sec:limits} reads as the answer
it is. Any dead time, or two further lags, restores the boundary and
with it the method.
\end{remark}

\begin{assumption}[Controller structure]\label{ass:pid}
The controller is drawn from the nested family
$\mathrm{I} \subset \mathrm{PI} \subset \mathrm{PID}$. It has a
parallel structure in a unity-feedback loop. The integral channel is
mandatory ($K_i > 0$); the proportional and derivative channels are
optional, $K_p, K_d \ge 0$. \new{The derivative channel is ideal; any implementation
filters it, and Section~\ref{sec:battery} records what the reference
implementation uses.}
\end{assumption}

Write $C(s) = K_p + K_i/s + K_d s$ for the transfer function of the
controller of Assumption~\ref{ass:pid}.
Write $\theta = \new{(K_i, K_p, K_d)}$ for the controller's gains. A
uniform multiplier $m>0$ scales $\theta \mapsto m\theta$. This is an
analysis direction, and it is distinct from the per-band multipliers
$(F_i, F_p, F_d)$ used by the rule (Section~\ref{sec:rule}). \new{Each
channel is linear in its own gain, so $C$ is homogeneous of degree one
in $\theta$.} Scaling all three gains by $m$ therefore maps the loop
$CG$ to $m\,CG$ exactly, and the Nyquist locus changes in magnitude but
not in shape. \new{Homogeneity, not linearity, is all that is used
below.} A plant with an integrator
would satisfy the two monotonicity hypotheses, but it would reverse the
sign in Proposition~\ref{prop:mono}(i). Reducing all gains would then
destabilize the loop instead of damping it, which would also remove the
basis of Remark~\ref{rem:bootstrap}. The \fix{non-integrating clause of
Assumption~\ref{ass:plant}} excludes this case. \new{Its phase clause
excludes a second family with the same reversal
(Remark~\ref{rem:ladder}).}

Write $y$ for the measured output, $u$ for the control signal and $e$
for the control deviation $y-r$. A tuning experiment applies a unit
setpoint step from rest, so that $e$ runs from $-1$ to $0$, and records
$e_k$, $k = 0,\dots,M$, at period $T_s$, over a window in which the
response settles. Form the running integral and the first
and second differences,
\begin{equation}
E_k = T_s\!\sum_{j=0}^{k} e_j, \quad
\Delta e_k = e_k - e_{k-1}, \quad
\Delta^2 e_k = \Delta e_k - \Delta e_{k-1},
\end{equation}
with $\Delta e_0 = \Delta^2 e_0 = 0$. Each portrait plots one of these
signals vertically against its own difference horizontally:
\begin{align}
\Gamma_0&: \bigl(e_k,\, E_k{-}E_{M}\bigr), \notag\\
\Gamma_1&: \bigl(\Delta e_k,\, e_k\bigr), \notag\\
\Gamma_2&: \bigl(\Delta^2 e_k,\, \Delta e_k\bigr).
\label{eq:plots}
\end{align}
The signal on the vertical axis is the \emph{parent} of its portrait:
$E$, $e$ and $\Delta e$ for $\Gamma_0$, $\Gamma_1$, $\Gamma_2$. Up to the exchange of axes, $\Gamma_1$ is the
classical $(e,\dot e)$ phase plane, long used to read nonlinear servo
transients graphically \cite{GrahamMcRuer1961}, and $\Gamma_0$ and
$\Gamma_2$ are its integrated and differentiated versions. All three end
at the origin when the loop settles. Putting the difference on the
horizontal axis makes an under-damped response wind counter-clockwise in
all three. We call
\eqref{eq:plots} \emph{Pachner portraits}, after the first author, who
devised the construction.

A well-damped loop spirals into the origin without completing a full
turn. An under-damped loop encircles it. The following definition makes
the count precise.

\begin{definition}[Turn index]\label{def:index}
Given a planar trajectory $(p_k, q_k)$, $k = 0,\dots,J$, with neither
coordinate identically zero:
(i) normalize each coordinate by its peak magnitude, mapping the curve
into $[-1,1]^2$;
(ii) after the trajectory first leaves the disc of radius
$\varepsilon \in (0,1)$ about the origin, truncate at the last sample
lying inside it---the last, not the first, so that a record
still winding after it settles is not silently discarded
(Section~\ref{sec:wellposed})\fix{, and retaining the curve in full if it
does not re-enter};
(iii) define $N$ as the winding number of the remaining curve about the
origin, that is, the total swept angle divided by $2\pi$. We evaluate it
as the difference of the endpoint angles, corrected by the signed
crossings of the negative horizontal semi-axis. Partial revolutions
count fractionally.
\rev{After the normalization each coordinate attains magnitude one, so
the first exit in step~(ii) always occurs. For a trajectory with a
coordinate identically zero we set $N := 0$.}
\end{definition}

Because $N$ is the winding number about the point at which the loop
settles, we call it the \emph{turn index}.

Definition~\ref{def:index} assumes the trajectory ends at the origin,
which the integral channel guarantees since $e \to 0$ for every stable
tuning of Assumption~\ref{ass:pid}. Proportional action alone, on a
self-regulating plant, would settle at an offset: $\Gamma_0$ and
$\Gamma_1$ would end away from the origin, the truncation would never
occur, and the winding number would measure the wrong quantity. This is
what $K_i > 0$ means in practice.

Applying Definition~\ref{def:index} to $\Gamma_0, \Gamma_1, \Gamma_2$
gives the turn-index vector $\mathbf N(\theta) = (N_0, N_1, N_2)$ at the
controller setting $\theta = \new{(K_i, K_p, K_d)}$. It comes from a
single step test, and it uses no knowledge of the plant.

Four propositions carry the construction. First, the count is free of
scale and of clock (Proposition~\ref{prop:invariance}). Second, one
damping dominates the record, because stability is lost through one
critical pole pair (Proposition~\ref{prop:pair}). Third, the count
follows that damping, because near the boundary the trajectory is a
logarithmic spiral (Proposition~\ref{prop:law}). Fourth, each portrait
isolates one frequency band, because each differencing tilts the
spectrum by one power of $\omega$ (Proposition~\ref{prop:band}).

\begin{proposition}[Topological invariance]\label{prop:invariance}
$\mathbf N$ is invariant under time reparametrization $t \mapsto a t$,
$a > 0$, and under independent positive scaling of either coordinate of
each trajectory, in particular under scaling of the setpoint step. It is
also unchanged by $e \mapsto -e$, which rotates every portrait by $\pi$:
the opposite sign convention gives the same counts.
\end{proposition}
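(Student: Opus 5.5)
The plan is to reduce every claimed invariance to one observation about Definition~\ref{def:index}: the turn index depends on the trajectory only through the ordered sequence of points after peak normalization. So it suffices to show that each operation either leaves the normalized sequence unchanged, or maps it by a transformation that commutes with steps (ii) and (iii) and preserves winding numbers about the origin.

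\emph{Coordinate scaling.} If $(p_k,q_k)\mapsto(\alpha p_k,\beta q_k)$ with $\alpha,\beta>0$, the peak magnitudes scale by $\alpha$ and $\beta$. Step~(i) therefore returns exactly the same normalized curve, and steps (ii)--(iii) act on identical data. A setpoint step of size $r_0$ scales $e$, and by linearity of the loop also $E_k-E_M$, $\Delta e$ and $\Delta^2 e$, by the common factor $r_0$. It is thus a special case, as is any change of units of $y$.

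\emph{Time reparametrization.} I would treat this in the continuous-time (fast-sampling) setting that the introduction refers to, where the portraits are read as $(e,\,E-E_\infty)$, $(\dot e,e)$, $(\ddot e,\dot e)$ up to the factors $T_s$, $T_s^{-1}$, $T_s^{-2}$ implicit in the differences. Under $t\mapsto at$ we have $\tilde e(t)=e(at)$. The derivative picks up $a$, the second derivative $a^2$, and the integral $a^{-1}$. Each portrait is then the original curve traversed along a reparametrized time, with coordinates rescaled by positive constants. A monotone reparametrization leaves the image and orientation of the curve unchanged, and the constants are absorbed by the scaling case. For sampled records the same argument applies exactly when $T_s$ is rescaled together with the clock, because the sample sequence $e_k$ is then identical. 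I would state this precisely and not claim more.

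\emph{Sign flip.} The map $e\mapsto -e$ negates every derived signal, so each portrait undergoes $(p,q)\mapsto(-p,-q)$, a rotation by $\pi$. Peak magnitudes are unchanged, so normalization commutes with the map. The disc of radius $\varepsilon$ is rotation-invariant, so the first exit and the last re-entry occur at the same indices. The swept angle of a curve is invariant under a rigid rotation about the origin. The degenerate case ``a coordinate identically zero'' is also preserved by all three operations, so $N:=0$ persists.

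The main obstacle is step~(iii) as actually computed: the difference of endpoint angles corrected by the crossings of the negative horizontal semi-axis. After a rotation by $\pi$, those crossings become crossings of the positive semi-axis. I will handle this by proving that the computed quantity equals the total swept angle divided by $2\pi$ for any curve avoiding the origin. Both are then invariant, and the sign convention of the branch cut is immaterial. A second subtlety is that truncation is defined through ``inside the disc'', and tangential touching of the boundary could make the index depend on $\varepsilon$. This is unaffected by the three operations, which preserve the disc exactly after normalization, so it need not be resolved here. It is deferred to Section~\ref{sec:wellposed}.
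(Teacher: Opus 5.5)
Your argument follows the paper's proof, written out in more detail. Peak normalization absorbs any positive per-axis scaling, and by linearity the step size. A time scaling is a monotone reparametrization composed with such scalings. The map $e\mapsto -e$ rotates each portrait by $\pi$, which commutes with the normalization and fixes both the $\varepsilon$-disc and the swept angle. The scaling and clock parts go through as you describe. Your reduction of the branch-cut evaluation in Definition~\ref{def:index}(iii) to the swept angle is more careful than the paper's single sentence. The gap is in the sign flip, exactly where you restricted that reduction to curves avoiding the origin: $\Gamma_2$ does not avoid it.

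Because $\Delta e_0=\Delta^2 e_0=0$, the normalized $\Gamma_2$ starts at $z_0=0$, where no angle is defined. On any plant with dead time the record also has $\Delta e_k=\Delta^2 e_k=0$ for the whole delay, so this start is the normal case, not a corner case.

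Let $z_m$ be the first sample off the origin, and let $\Theta$ be the swept angle of $z_m,\dots,z_J$. The segment leaving the origin does not cross the negative horizontal semi-axis, so the evaluation gives $2\pi N_2=\Theta+\arg z_m-\arg 0$. Your lemma makes $\Theta$ invariant. But $\arg 0$ must be fixed by convention (\texttt{atan2} returns $0$), and the flip replaces $\arg z_m$ by $\arg z_m\pm\pi$ while leaving that value alone. So $N_2$ shifts by exactly half a turn. Concretely, for $e$ rising from $-1$ the curve leaves the origin into the first quadrant, close to the positive horizontal axis, and the flipped curve leaves into the third. The caption of Figure~\ref{fig:law} already attributes a half-turn offset of $\Gamma_2$ to this same origin start.

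To close the gap you need a convention that Definition~\ref{def:index} does not supply and that commutes with rotation: begin the angle count at $z_m$, that is, give the origin the direction in which the curve leaves it. Your lemma applied to $z_m,\dots,z_J$ then finishes the proof. Without such a convention the sign-flip claim holds for $N_0$ and $N_1$, but for $N_2$ only up to $\tfrac12$. The paper's one-line proof passes over the same point.
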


\begin{proof}
The winding number depends only on the oriented image of the curve, not
its parametrization; per-axis peak normalization removes amplitude and
units, and the truncation disc is centred at the origin, so it is
rotation invariant.
\end{proof}

For a sampled record the time-scaling statement is exact when $T_s$ is
scaled with the time axis, and holds at fixed $T_s$ in the
fast-sampling limit.

\begin{sloppypar}
One set of limits therefore serves all loops. The defaults
$\bar{\mathbf N} = (0.5, 0.75, 1.0)$ and $\varepsilon = 0.1$ are used
unchanged in every result of this paper. They act as the definition of
``well-damped'' and not as adjustable settings. Since $N$ measures
turns, the three limits are $2$, $3$ and $4$ quarter turns. Each band is
thus one quarter turn more tolerant than the band below it.
\end{sloppypar}

Which damping the counts read, and why a single damping dominates,
follows from the plant class.

\begin{proposition}[Critical pair]\label{prop:pair}
Let $G$ satisfy Assumption~\ref{ass:plant} and scale all gains
together, $CG \mapsto m\,CG$ with $m>0$. Then the loop is stable up to a
finite $m^\ast$, and at $m^\ast$ it loses stability through one complex
pole pair. This happens at the phase-crossover frequency
$\omega_{180}$, which maximizes $|CG(j\omega)|$ over
$\{\omega:\arg CG(j\omega)\rev{\equiv-\pi \pmod{2\pi}}\}$. Near $m^\ast$
this pair is the least damped in the loop, and its damping $\zeta(m)$
decreases continuously to zero as $m\uparrow m^\ast$.
\end{proposition}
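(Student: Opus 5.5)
The plan is a Nyquist argument on the scaled loop $L_m = m\,CG$. Asymptotic stability of $G$ (Assumption~\ref{ass:plant}) means the open loop has no right-half-plane poles; the integrator pole of $C$ at the origin is handled by the usual indentation. The closed loop is therefore stable exactly when the locus of $L_m(j\omega)$ does not encircle $-1$.

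\textbf{Step 1: the phase crossover.} Write $\arg CG = \arg C + \arg G$. On $(0,\infty)$ the controller phase $\arg C$ lies in $(-\pi/2,\pi/2)$ and tends to $-\pi/2$ as $\omega\to0$, because $K_i>0$. The plant phase $\arg G$ starts at $0$, which follows from self-regulation and $K>0$, and decreases strictly below $-3\pi/2$. Hence:
\begin{itemize}
\item as $\omega\to0$, $\arg CG \to -\pi/2$;
\item at large $\omega$, $\arg CG < -\pi$.
\end{itemize}
So the set $\Omega=\{\omega:\arg CG\equiv-\pi\}$ is nonempty. Next, $|CG|\to0$ as $\omega\to\infty$; this is the lag-count condition noted after Lemma~\ref{lem:class}. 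Also $|CG|\to\infty$ as $\omega\to0$, but there $\arg CG$ is near $-\pi/2$. Together these bound the crossings away from $0$ and make $|CG|$ bounded on $\Omega$. The crossings are isolated, and any accumulation is only at infinity, where $|CG|$ vanishes. So the supremum over $\Omega$ is attained at some $\omega_{180}$. Define $m^\ast = 1/|CG(j\omega_{180})|$, which is finite.

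\textbf{Step 2: stability for $m<m^\ast$, loss at $m^\ast$.} For $m<m^\ast$, every crossing of the negative real axis lies strictly to the right of $-1$. The point $-1$ then lies in the unbounded component cut out by the locus, which is connected to the large-$|L|$ region near $\omega=0^+$ in the sector around $-\pi/2$. Since $\Omega$ may have several points and the phase of $C$ is not monotone, I would count net crossings of the segment $(-\infty,-1)$ rather than rely on monotone phase. With no crossings to the left of $-1$, the winding count is zero, so the loop is stable.

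At $m=m^\ast$ the locus passes through $-1$ at $\pm\omega_{180}$. Continuity of closed-loop poles in $m$ then places a conjugate pair $\pm j\omega_{180}$ on the imaginary axis. This is \emph{one} pair provided the maximizer is unique. That I would argue from strict monotonicity: distinct crossings with equal magnitude are non-generic. Alternatively, I would state ties as excluded, which is the honest route if genericity cannot be proved from the hypotheses. The pair is complex because $\omega_{180}>0$. A pole at $0$ is impossible, since $CG$ has a pole there.

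\textbf{Step 3: damping along the branch.} The closed-loop poles are roots of $1+mC(s)G(s)=0$. The crossing at $m^\ast$ is simple (transversal) when $\partial_\omega$ of the locus at $\omega_{180}$ is nonzero and not parallel to the real axis; strict decrease of $\arg G$ and the maximality of $|CG|$ give this. The implicit function theorem then yields a continuous branch $s(m)$ with $s(m^\ast)=j\omega_{180}$ and $\operatorname{Re}s(m)<0$ for $m<m^\ast$. Hence $\zeta(m)=-\operatorname{Re}s/|s|$ is continuous and tends to $0$.

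For "least damped near $m^\ast$", note that all other closed-loop poles stay in the open left half-plane at $m^\ast$. Only finitely many lie in any bounded region, and the high-frequency ones have $|L_m|\to0$, so they sit near the open-loop poles or the delay chain with damping bounded away from zero. By continuity, they keep damping above some $\delta>0$ in a neighbourhood of $m^\ast$, while $\zeta(m)<\delta$ there.

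Monotone decrease of $\zeta$ would come from the sign of $d\operatorname{Re}s/dm = -\operatorname{Re}\bigl[CG/(m(CG)')\bigr]$ at the crossing.

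\textbf{Main obstacle.} The main obstacle is the dead-time case. There $1+mCG$ is transcendental with infinitely many roots, so "least damped" needs the uniform bound on the high-frequency root chains just sketched. Uniqueness of the maximizing crossing is a second weak point, and I expect to have to assume it generically.
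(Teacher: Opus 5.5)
Your proposal is the paper's argument written out in full: a Nyquist count on the radially scaled locus $m\,CG$, with $m^\ast = 1/|CG(j\omega_{180})|$ fixed by the largest phase crossing, a single critical frequency taken as generic exactly as the paper does, and $\zeta(m)\to 0$ from continuity of the critical root branch; your explicit count of crossings of $(-\infty,-1)$, which tolerates a non-monotone $\arg C$, usefully sharpens the paper's ``first touches $-1$''. Two justifications need weakening: first, $\frac{\mathrm d}{\mathrm d\omega}\arg CG(j\omega_{180})<0$, which is what makes the root crossing transversal, does not follow from the fall of $\arg G$ and the maximality of $|CG|$, because the rising $\arg C$ can cancel the fall and stability below $m^\ast$ forces only $\le 0$, so this condition is generic just like uniqueness; second, with dead time the retarded root chains have $\Re s\to-\infty$ only like $\ln|s|$, so their ratio $-\Re s/|s|$ tends to zero and ``least damped'' holds only in the sense of slowest decay, with the uniform bound coming from $|m\,CG(s)|<1$ on $\{\Re s\ge -c\}$ for large $|s|$, not from $|m\,CG|\to 0$ at the roots, where it equals one.
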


\begin{proof}
Since $\Re\,C(j\omega) = K_p \ge 0$, the phase $\arg C$ lies in
$[-\pi/2,\pi/2]$, while \fix{$\arg G$ falls below $-3\pi/2$
(Assumption~\ref{ass:plant})}. Hence $\arg CG$ reaches $-\pi$ at
frequencies bounded away from $0$ and from $\infty$. Increasing $m$
scales the locus $m\,CG$ radially. By the Nyquist criterion
\cite{Nyquist1932} it first touches $-1$ at
$m^\ast = 1/|CG(j\omega_{180})|$, and generically at one frequency. The
touch is transversal, because $\partial|m\,CG|/\partial m = |CG| > 0$.
One conjugate pair therefore crosses the imaginary axis at
$\omega_{180}$, while all other poles keep strictly positive damping
nearby. \rev{The transversal crossing gives the strict decrease of
$\zeta(m)$ near $m^\ast$, and continuity gives $\zeta(m) \to 0$.}
\end{proof}

\rev{The modulo formulation is needed because dead time makes
$\arg CG$ pass $-\pi$ on infinitely many branches; the maximum over them
is attained because $|CG| \to 0$.}

Near the boundary a single pole pair therefore carries the ringing, and
we write
\begin{equation}
e(t) = x(t) + r(t), \quad
x(t) = A\,e^{-\alpha t}\cos(\omega t+\varphi),
\label{eq:mode}
\end{equation}
where $x$ is the critical mode, with $\omega_n=\sqrt{\alpha^2+\omega^2}$
and damping $\zeta = \alpha/\omega_n$, and $r$ is the remaining,
faster-settling part of the error. \fix{The mode comes to dominate the record as
$m \uparrow m^\ast$, because its damping tends to zero while $r$ keeps
its decay rate. The decomposition is asymptotic in this sense rather
than exact.} The rule operates between the well-damped boundary and the
stability boundary, and there \eqref{eq:mode} serves as the working
model; Section~\ref{sec:battery} validates it. The propositions below
are based on \eqref{eq:mode}.

\begin{figure*}[t]
\centering
\includegraphics[width=0.64\textwidth]{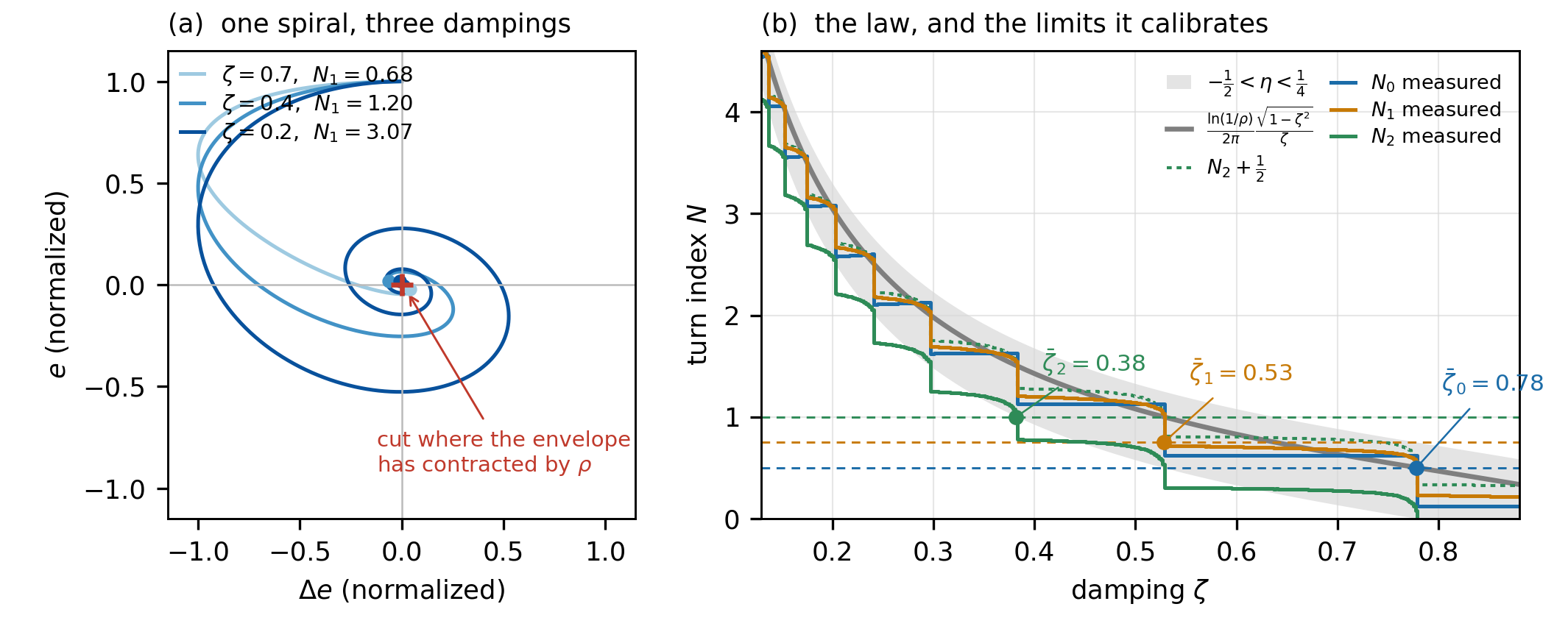}
\caption{Proposition~\ref{prop:law}. (a)~One logarithmic spiral at three dampings,
here through $\Gamma_1$; the other two are linear images of it. Each curve is cut where the envelope has contracted by $\rho$; the lower
the damping, the more turns that takes. (b)~The turn index against damping for the exact
second-order step error, $\rho = \delta = 0.02$. The leading term of \eqref{eq:law} is smooth,
while the measured counts are piecewise constant, because the cut moves
by half a cycle at a time. The shaded band is
$-\tfrac12 < \eta < \tfrac14$, the half cycle of the cut below and the
start term above. $\Gamma_2$ runs half a turn low because
$\Delta^2 e_0 = 0$ starts it at the origin; the dotted curve is
$N_2 + \tfrac12$, which lies in the same band as the other two. Each
limit is met on a step, whose midpoint is $\bar\zeta_k$.}
\label{fig:law}
\end{figure*}

\begin{proposition}[Damping law]\label{prop:law}
Let the mode of \eqref{eq:mode} be recorded until its envelope has
contracted by a factor $\rho \in (0,1)$. Then every portrait of
Definition~\ref{def:index}, truncated there, winds
\begin{equation}
N = \frac{\ln(1/\rho)}{2\pi}\,
\frac{\sqrt{1-\zeta^2}}{\zeta} + \eta,
\qquad |\eta| < 1,
\label{eq:law}
\end{equation}
where $\eta$ collects the phase of the cut, the phase of the start and
the linear distortion of the angle. The leading term is strictly
decreasing on $(0,1)$.
\end{proposition}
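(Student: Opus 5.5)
The plan is to work in continuous time and take the pure mode $x(t)=A e^{-\alpha t}\cos(\omega t+\varphi)$ of \eqref{eq:mode}, with $r\equiv 0$. The first step shows that each portrait is a fixed linear image of one logarithmic spiral. Write $x=\Re(A e^{i\varphi} e^{\lambda t})$ with $\lambda=-\alpha+i\omega$. In the fast-sampling limit the differences are $T_s$ times derivatives, and $E-E_M$ is an antiderivative that vanishes at the end. Each of $x$, $\dot x$, $\ddot x$ and the antiderivative $\int x$ is therefore $\Re(c\,e^{\lambda t})$ with $c$ equal to $A e^{i\varphi}$ multiplied by $\lambda^{\pm1}$ or a power of $\lambda$. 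Each portrait $\Gamma_k$ is thus $t\mapsto L_k\,(e^{-\alpha t}\cos\omega t,\;e^{-\alpha t}\sin\omega t)$ for a real $2\times 2$ matrix $L_k$. This matrix is invertible, because parent and difference differ in phase by $\arg\lambda\in(\pi/2,\pi)$, which is neither $0$ nor $\pi$. The per-axis peak normalization of Definition~\ref{def:index} only multiplies $L_k$ by a positive diagonal matrix, so by Proposition~\ref{prop:invariance} it does not change the count. The orientation is counter-clockwise by the convention fixed after \eqref{eq:plots}.

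The second step counts the winding of the model spiral $z(t)=e^{-\alpha t}e^{i\omega t}$ itself. The envelope $e^{-\alpha t}$ contracts by $\rho$ at $t_\rho=\ln(1/\rho)/\alpha$. Over that time the argument sweeps $\omega t_\rho$, so the spiral winds $\omega t_\rho/2\pi=\frac{\ln(1/\rho)}{2\pi}\frac{\omega}{\alpha}$ times. Since $\zeta=\alpha/\omega_n$ and $\omega=\omega_n\sqrt{1-\zeta^2}$, this equals the leading term of \eqref{eq:law}. Its strict decrease follows by writing it as $\sqrt{1/\zeta^2-1}$ times a positive constant, which is decreasing on $(0,1)$.

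The third step bounds the remainder $\eta$, and I expect it to be the main obstacle. Three contributions enter. The first is the distortion of the angle under $L_k$. An invertible linear map preserves the count of full turns: the image of a ray through the origin is again a ray, so crossings of each ray by the spiral correspond one to one with crossings of the image ray. The angle swept by $L_k z$ therefore differs from that swept by $z$ by less than half a turn, plus whatever partial turn is left at the two ends. The second is the start phase: the initial point lies somewhere on the first revolution, which affects the fractional part only. The third is the cut. Truncation at the last sample inside the $\varepsilon$-disc happens within half a cycle of the ideal contraction time, because the normalized envelope is monotone and the ellipse $L_k(\text{circle})$ reaches its minimum radius twice per period. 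The argument would combine these into $|\eta|<1$ by tracking the crossings of the negative horizontal semi-axis, as in Definition~\ref{def:index}(iii). The half cycle of the cut gives a contribution in $(-\tfrac12,0]$, and the endpoint angles give one in $[0,\tfrac14)$ or less, consistent with Figure~\ref{fig:law}. Care is needed so that the two bounds do not add up to a full turn. The attempt would be to measure all angles in the coordinates pulled back by $L_k^{-1}$, where the curve is exactly circular and its angle is monotone. Translating the endpoint angles through $L_k$ then costs less than a quarter turn each, and the pieces together give a strict bound below one turn.
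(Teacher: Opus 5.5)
Your proposal follows the paper's proof essentially step for step. Each portrait is a linear image of one logarithmic spiral; your $e^{-\alpha t}e^{i\omega t}$ is in fact exactly circular, whereas the paper's $(x,\dot x/\omega_n)$ is circular only approximately. The leading term is the angle swept over the contraction time $\ln(1/\rho)/\alpha$, $\eta$ collects a cut term in $(-\tfrac12,0]$, a start term in $[0,\tfrac14)$ and a linear-distortion term of less than half a turn, and monotonicity is read off $\sqrt{1/\zeta^2-1}$.

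One slip needs fixing: the half-cycle bound on the cut comes from the \emph{maximum} radius of $L_k(\text{circle})$, attained twice per period, or, as in the paper, from the two peaks of $|e|$ per cycle. The minimum radius only confines the cut to an interval of length $\ln(\sigma_{\max}/\sigma_{\min})/\alpha$, where $\sigma_{\max}\ge\sigma_{\min}$ are the singular values of $L_k$.
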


\begin{proof}
In the mode's own phase plane $(x, \dot x/\omega_n)$ the trajectory is a
circular logarithmic spiral, whose angle advances at the rate $\omega$
while the envelope decays at $\alpha$. Contraction by $\rho$ takes the
time $\alpha^{-1}\ln(1/\rho)$, over which the angle advances by
$(\ln(1/\rho)/2\pi)(\omega/\alpha)$ turns, with
$\omega/\alpha = \sqrt{1-\zeta^2}/\zeta$. Three terms perturb this count. The cut is the last crossing of the
settling level, at most half a cycle before the envelope reaches it, and
half rather than a whole because $|e|$ peaks twice per cycle. The record
starts where the envelope still exceeds the first peak of $|e|$,
returning $\arcsin\zeta/2\pi<1/4$ of a turn. And differentiation
multiplies the mode by $-\alpha+j\omega$, so every portrait is a linear
image of that plane, as is the peak normalization; a linear isomorphism
induces a monotone degree-one map of directions and shifts an accumulated
angle by less than $\pi$. Monotonicity follows from
$\mathrm d(\sqrt{1-\zeta^2}/\zeta)/\mathrm d\zeta<0$.
\end{proof}

\rev{The stopping rule fixes $\rho$: cutting at the first meeting with
the $\varepsilon$-disc gives $\rho = \varepsilon$, while under the
settling-anchored window of Section~\ref{sec:wellposed} the record ends
first whenever $\delta < \varepsilon$, and then $\rho = \delta$, which is
the case for the defaults of Table~\ref{tab:defaults}. Measured over $\zeta \in [0.13, 0.90]$, the residual
holds to $-\tfrac12 < \eta < \tfrac14$ for $\Gamma_0$ and $\Gamma_1$, and
to that interval shifted down half a turn for $\Gamma_2$, which starts at
the origin since $\Delta^2 e_0 = 0$ (Figure~\ref{fig:law}). Each $N_k$ is
piecewise constant with jumps of at most half a turn, the step of the
cut, so each crossing $N_k = \bar N_k$ fixes an interval of damping whose
midpoint is the calibrated $\bar\zeta_k$.}

It follows that\rev{, up to the endpoint band,}
$N_k>\bar N_k \iff \zeta<\bar\zeta_k$: the verdict ``rings'' is a sign
test, free of scale and of clock by
Proposition~\ref{prop:invariance}. The rule uses only this sign. Since $\eta$ shifts the constant in
\eqref{eq:law} without changing its monotonicity, the damping
specifications are \emph{calibrated} rather than derived. For the exact second-order step error the measured $N_k$ cross
$\bar{\mathbf N}=(0.5,0.75,1.0)$ at
$\bar\zeta \approx 0.78,\,0.53,\,0.38$ (Figure~\ref{fig:law}b), so the
specification is strictest in the lowest band.

\begin{proposition}[Band selectivity and upward leakage]\label{prop:band}
Write $e = x + r$ as in \eqref{eq:mode}\rev{, and let the residual $r$
be completely monotone with rates bounded by
$\alpha_r<\omega_n$, that is,
$r(t)=\int_{(0,\alpha_r]}e^{-\lambda t}\,\mathrm d\mu(\lambda)$ for a
finite nonnegative measure $\mu$. Write $w_j$ for the visibility of the mode in the parent signal of
$\Gamma_j$, the ratio of its amplitude to the magnitude of the residual
there. Then
\begin{equation}
w_j/w_{j-1} \;\ge\; \omega_n/\alpha_r \;>\; 1
\label{eq:leak}
\end{equation}
at every time and for every $j$. The mode is therefore suppressed in
every band below its own, and its visibility increases from its own band
upward. \fix{Its own band is therefore the lowest one whose count it
lifts, and} the set of bands whose count a mode lifts is upward-closed,
$\{k,k{+}1,\dots\}$}.
\end{proposition}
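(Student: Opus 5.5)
The plan is to work in continuous time, where the parents of $\Gamma_0,\Gamma_1,\Gamma_2$ are $\int e$, $e$ and $\dot e$, and to follow how a single differentiation acts on the mode and on the residual separately. In the fast-sampling limit the differences $\Delta$ and $\Delta^2$ are $T_s$ and $T_s^2$ times these derivatives. Multiplying a parent by a constant scales the mode and the residual alike, so the visibility ratio is unchanged. For $\Gamma_0$ the parent $E-E_M$ is the tail integral $-\int_t^\infty e$, which is exactly the antiderivative that tends to zero. So the three parents form one chain $P_0, P_1=\dot P_0, P_2=\dot P_1$ with no constants of integration. That leaves a single inequality to prove: passing from a parent to its derivative multiplies the mode's amplitude by at least $\omega_n$ and the residual's magnitude by at most $\alpha_r$.

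For the mode, write $x=\Re\{A e^{(-\alpha+j\omega)t+j\varphi}\}$. Differentiating multiplies the complex amplitude by $-\alpha+j\omega$, whose modulus is exactly $\omega_n$. Integrating from infinity divides it by the same factor. The envelope amplitude of the mode therefore scales by $\omega_n$ per band, at every time. This is the same linear-image fact used in the proof of Proposition~\ref{prop:law}.

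For the residual, use complete monotonicity. We have $r^{(j)}(t)=\int(-\lambda)^j e^{-\lambda t}\,\mathrm d\mu$ and $-\int_t^\infty r=-\int \lambda^{-1}e^{-\lambda t}\,\mathrm d\mu$. Each of these integrands has a fixed sign, so absolute values pass inside the integral. Since $\lambda\le\alpha_r$ on the support of $\mu$, this gives $|r^{(j+1)}(t)|\le \alpha_r\,|r^{(j)}(t)|$ pointwise. It holds for $j=0,1$ and also for the step from the tail integral to $r$. Dividing the mode ratio by the residual ratio yields $w_j/w_{j-1}\ge\omega_n/\alpha_r$ at every $t$. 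This is strictly greater than $1$ by the hypothesis $\alpha_r<\omega_n$.

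The qualitative conclusions then follow by monotonicity in $j$. Visibility increases strictly up the chain. Suppose the mode is visible enough to lift the count in band $k$, meaning $w_k$ exceeds whatever threshold makes the winding of the mode dominate the truncated curve. Then $w_{k+1},w_{k+2}>w_k$ also exceed that threshold, so the lifted set is upward-closed. Its least element is the mode's own band, and every band below it has smaller visibility, so the mode is suppressed there.

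The main obstacle is this last translation from visibility to count. The count depends on the peak normalisation and on the truncation of Definition~\ref{def:index}, so a visibility threshold does not literally determine a winding number. I would handle it by stating the lifting criterion as a threshold on $w_j$ that is uniform over bands. The justification is that the angular contribution of a dominated residual is bounded, by the same degree-one argument as in Proposition~\ref{prop:law}, so only the ordering of the $w_j$ matters. I would also remark that the sampled case inherits the result up to $O(T_s)$ corrections.
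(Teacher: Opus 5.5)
Your proposal is correct and follows the paper's own proof. The mode's amplitude scales by exactly $\omega_n^{\pm1}$ per band step, and complete monotonicity bounds the residual's scaling by $\alpha_r^{\pm1}$ pointwise in $t$. Upward closure then rests on the premise that a count is nondecreasing in visibility, which the paper also asserts rather than proves; your consecutive bound $|r^{(j+1)}|\le\alpha_r|r^{(j)}|$ and your identification of $E-E_M$ with the tail antiderivative are, if anything, more explicit than the paper's comparison of each band with $r(t)$.
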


\rev{\begin{proof}
Differentiation multiplies the amplitude of the mode by exactly
$\omega_n$, since
$\tfrac{\mathrm d}{\mathrm dt}\,e^{-\alpha t}\cos(\omega t+\varphi)
=\omega_n\,e^{-\alpha t}\cos(\omega t+\varphi+\psi)$ with
$\psi=\arg(-\alpha+j\omega)$, and integration multiplies it by exactly
$1/\omega_n$. For the residual, differentiating under the integral gives
$r^{(j)}(t)=(-1)^j\!\int \lambda^j e^{-\lambda t}\,\mathrm d\mu(\lambda)$,
so that $|r^{(j)}(t)|\le\alpha_r^{\,j}r(t)$ at every $t$, while the
$j$-fold antiderivative
$\int \lambda^{-j}e^{-\lambda t}\,\mathrm d\mu(\lambda)$ is at least
$\alpha_r^{-j}r(t)$. One band step therefore scales the mode by
$\omega_n^{\pm1}$ and the residual by at most $\alpha_r^{\pm1}$ in the
same direction, which is \eqref{eq:leak}. The bound holds pointwise in
$t$, so it survives the peak normalization of
Definition~\ref{def:index}. Upward closure follows because a count is nondecreasing in the
visibility of the mode that drives it. Several modes superpose, each
contributing an upward-closed set; interference between them is
neglected, and the synthetic tests below and Section~\ref{sec:battery}
support the approximation.
\end{proof}}

Proposition~\ref{prop:band} orders the three portraits by frequency, but
it does not yet attach any of them to a controller channel. What
supplies that link is the way the loop gain responds to each gain
separately.

\begin{lemma}[Channel ownership]\label{lem:own}
Let $C$ be the controller of Assumption~\ref{ass:pid}, and let
$\sigma_k(\omega) = \partial \log|C|/\partial \log K^{(k)}$ be the
elasticity of the loop gain with respect to the gain of channel $k$.
Then
\begin{equation}
\sigma_0(\omega) + \sigma_1(\omega) + \sigma_2(\omega) = 1
\label{eq:partition}
\end{equation}
at every $\omega$, and each term tends to $1$ on its own band:
$\sigma_0 \to 1$ as $\omega \to 0$, $\sigma_2 \to 1$ as
$\omega \to \infty$, and $\sigma_1 \to 1$ wherever $K_p$ dominates $|C|$.
\end{lemma}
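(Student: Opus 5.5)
Write $C(j\omega) = K_p + j(K_d\omega - K_i/\omega)$ and work with $\log|C| = \tfrac12\log\bigl(K_p^2 + (K_d\omega - K_i/\omega)^2\bigr)$, a function of the three gains at fixed $\omega$. The plan is to compute the three elasticities explicitly and then read off both claims. Because $C$ is homogeneous of degree one in $\theta=(K_i,K_p,K_d)$, as noted after Assumption~\ref{ass:pid}, so is $|C|$. Euler's identity for homogeneous functions then gives $\sum_k K^{(k)}\,\partial|C|/\partial K^{(k)} = |C|$. Dividing by $|C|$ gives exactly \eqref{eq:partition}, since $\sigma_k = (K^{(k)}/|C|)\,\partial|C|/\partial K^{(k)}$. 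This step needs only $|C(j\omega)|>0$, which holds for every $\omega>0$ because $K_i>0$ makes the imaginary part nonzero away from the single point $\omega^2 = K_i/K_d$. At that point $|C|=K_p$, and we need $K_p>0$ there or we treat it as a removable exception.

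For the limits, write $X = K_d\omega - K_i/\omega$, so that $|C|^2 = K_p^2 + X^2$. Differentiating gives
$\sigma_1 = K_p^2/(K_p^2+X^2)$, $\sigma_0 = -X K_i/\bigl(\omega(K_p^2+X^2)\bigr)$ and $\sigma_2 = X K_d\omega/(K_p^2+X^2)$. Summing these recovers the partition directly, since $-XK_i/\omega + XK_d\omega = X^2$; this also serves as an independent check of Euler's step.

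As $\omega\to0$ we have $X\sim -K_i/\omega$, hence $\sigma_0 \sim (K_i/\omega)^2/\bigl(K_p^2+(K_i/\omega)^2\bigr)\to1$; this uses $K_i>0$ from Assumption~\ref{ass:pid}. As $\omega\to\infty$ with $K_d>0$, symmetrically $\sigma_2\to1$. Finally, $\sigma_1 = K_p^2/|C|^2$ tends to $1$ exactly when $X^2/K_p^2\to0$, which is what ``$K_p$ dominates $|C|$'' means.

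The main obstacle is interpretive rather than computational. The individual $\sigma_0$ and $\sigma_2$ are not confined to $[0,1]$: for $X>0$ we get $\sigma_0<0$ and $\sigma_2>1$ is possible. The statement, however, claims only the sum identity and the band limits, so I would note this sign behaviour and not assert positivity. A second point needs care. If $K_d=0$ (the I or PI member), then $\sigma_2\equiv0$ and the limit $\sigma_2\to1$ fails. There I would state that the derivative limit applies when the derivative channel is present. For the PI member, $\sigma_1\to1$ as $\omega\to\infty$ instead, consistent with ``$K_p$ dominates'' at high frequency.
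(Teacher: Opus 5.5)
Your proposal is correct and follows the paper's proof: you use the same explicit elasticities in terms of $X = K_d\omega - K_i/\omega$, which sum to $(K_p^2+X^2)/|C|^2 = 1$, and you identify the sum as Euler's identity for the degree-one homogeneity of $C$ in $\theta$; the band limits are then read off from which of $K_i/\omega$, $K_p$, $K_d\omega$ dominates $|C|$. Your caveats are sound refinements that the paper leaves implicit: the limit $\sigma_2\to1$ needs $K_d>0$, and the logarithmic derivatives need $|C(j\omega)|>0$.
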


\begin{proof}
Write $C(j\omega) = K_p + jX$ with $X = K_d\omega - K_i/\omega$, so that
$|C|^2 = K_p^2 + X^2$. Differentiating gives
$\sigma_1 = K_p^2/|C|^2$, $\sigma_2 = X K_d\omega/|C|^2$ and
$\sigma_0 = -X K_i/(\omega|C|^2)$, and their sum is
$(K_p^2 + X^2)/|C|^2 = 1$. This is Euler's identity for the degree-one
homogeneity of $C$ in $\theta$. The limits follow because exactly one of
$K_i/\omega$, $K_p$ and $K_d\omega$ dominates $|C|$ on each band.
\end{proof}

Since $G$ does not depend on the gains, $\sigma_k$ is also the elasticity
of $|CG|$. Changing $K^{(k)}$ therefore moves the loop gain essentially
only on band $k$, and by Proposition~\ref{prop:pair} it is the damping of
the crossover in that band which responds, while
Proposition~\ref{prop:band} says $\Gamma_k$ is the portrait that reads
the same band. The two orderings agree: this is the sense in which $N_0$
points to $K_i$, $N_1$ to $K_p$ and $N_2$ to $K_d$, and that
correspondence is the entire basis of the tuning rule.

\fix{Two qualifications belong with \eqref{eq:partition}. It partitions
unity but not into nonnegative parts, since $\sigma_0$ and $\sigma_2$ dip
slightly below zero where the two reactances cancel. And the bands must
exist: where the integral and derivative times $T_i = K_p/K_i$ and
$T_d = K_d/K_p$ are close, no frequency range is owned by $K_p$ and
$N_1$ loses its selectivity.}

\emph{Attribution.} The violated set is a union of upward-closed sets
(Proposition~\ref{prop:band}). Its lowest member is therefore the only
band whose violation cannot be leakage, because there is no band below
it from which leakage could come. Violations above it carry no evidence
either way. Section~\ref{sec:battery} shows the asymmetry on a real
plant, and the rule of Section~\ref{sec:rule} is built on it.

%=====================================================================
\section{Well-posedness of the count}
\label{sec:wellposed}
%=====================================================================

Normalizing each axis separately gives
Proposition~\ref{prop:invariance}, but removes all information about
absolute scale. A small persistent oscillation, from quantization, a weakly damped
parasitic mode or numerical residue, can then dominate $\Delta e$ and
$\Delta^2 e$ long after $e$ has settled: normalized, it fills the
portrait and winds without end, and the comparison with the limits
becomes an artifact of the window.

This failure is not hypothetical. Take $G(s) = e^{-4s}/(8s+1)^6$ at
gains the rule reaches during an iteration. The response settles well
inside the default window, yet lengthening that window sixfold raises
$N_1$ from $4.13$ to $7.89$ and $N_2$ from $4.31$ to $17.9$. Meanwhile
$N_0$ does not move at all, since integration suppresses the fast
residual that differencing amplifies
(Proposition~\ref{prop:band}). The growth stops only where the residue
reaches the resolution of the record; in a noisy installation the residue
does not decay, and it would not stop at all. In closed loop the comparisons with the limits then follow the window
rather than the loop, and the corrective moves work against each other
rather than converging.

The remedy is to anchor the window to the settling of the parent signal.

\begin{definition}[Settling-anchored window]\label{def:guard}
Fix $\delta \in (0,1)$ (default $\delta = 0.02$). Let
$k_\delta = 1 + \max\{k : |e_k| > \delta \max_j |e_j|\}$ and truncate all
three trajectories \eqref{eq:plots} at $k_\delta$ before applying
Definition~\ref{def:index}.
\end{definition}

\begin{proposition}[Well-posedness]\label{prop:wellposed}
Under Definition~\ref{def:guard}, the turn indices $\mathbf N$ do not
depend on the window length, for every window that contains the
$\delta$-settling time of $e$. Proposition~\ref{prop:invariance}
continues to hold, because $\delta$ has no units and is defined on the
normalized error.
\end{proposition}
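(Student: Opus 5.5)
The plan is to show that, once the window contains the $\delta$-settling time, every input to Definition~\ref{def:index} is a function of the samples $e_0,\dots,e_{k_\delta}$ alone. Fix two windows $M < M'$, both containing the $\delta$-settling time. Write $e_k$ for the record over the longer window; the shorter record is its prefix.

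\textbf{Step 1: the cut index is the same in both windows.} The quantity $\max_j |e_j|$ is attained during the transient. For a unit step from rest $|e_0| = 1$, and the peak is reached before settling. Both windows contain the settling time, so the maximum is the same in both. The tail samples beyond settling satisfy $|e_k|\le\delta\max_j|e_j|$ by the definition of the settling time. Hence $\max\{k : |e_k| > \delta\max_j|e_j|\}$ coincides, and so does $k_\delta$.

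\textbf{Step 2: the three truncated portraits are the same.} The signals $e_k$, $\Delta e_k$ and $\Delta^2 e_k$ for $k\le k_\delta$ depend only on samples up to $k_\delta$, since the differences are causal. The delicate coordinate is $E_k - E_M$ in $\Gamma_0$, because it references the window end. Here I would read the guard as replacing the anchor $E_M$ by $E_{k_\delta}$ on the truncated record. This is the natural reading, since truncating the trajectories means the record is $e_0,\dots,e_{k_\delta}$. Under that reading $\Gamma_0$ is also window-free.

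If the anchor stays at $E_M$, I would instead argue as follows. The difference $E_{M'}-E_M = T_s\sum_{k>M} e_k$ is the integral of a settled tail. For a genuinely settled record it is zero up to residue, and it shifts $\Gamma_0$ by a constant that is negligible after normalization. I expect this $\Gamma_0$ anchor to be the main obstacle. I would resolve it by the first, definitional reading, and state it explicitly.

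\textbf{Step 3: the count itself is window-free.} The normalization in step (i) of Definition~\ref{def:index} uses peaks over the truncated curve. The $\varepsilon$-truncation in step (ii) and the winding number in step (iii) are functions of the curve alone. Equal truncated curves therefore give equal $\mathbf N$.

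\textbf{Step 4: invariance is preserved.} Proposition~\ref{prop:invariance} continues to hold for three reasons.
\begin{itemize}
\item The threshold $\delta\max_j|e_j|$ scales with $e$. So $k_\delta$ is unchanged under positive scaling of $e$ and under $e\mapsto -e$, because it depends only on $|e_k|$.
\item Under $t\mapsto at$ with $T_s$ scaled alike, the sample sequence is unchanged, and so is $k_\delta$.
\item The truncated curves are therefore transformed exactly as before. The argument of Proposition~\ref{prop:invariance} then applies verbatim.
\end{itemize}
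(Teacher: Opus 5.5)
Your proposal is correct and follows the paper's own argument, which consists of the three observations you expand: every sample beyond $k_\delta$ is discarded, $k_\delta$ depends only on $e$ up to its $\delta$-settling time, and $k_\delta$ is defined through the scale-free ratio $|e_k|/\max_j|e_j|$. The paper never mentions the anchor $E_M$ of $\Gamma_0$, and its claim that all later samples are discarded tacitly relies on the re-anchoring at $E_{k_\delta}$ that you make explicit; keep that reading, because your fallback only shows that the shift $E_{M'}-E_M$ is small, and that does not make the fractional count exactly window-free, since the endpoint angle of a curve ending near the origin still moves with the shift.
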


\fix{Both claims are immediate: every sample beyond $k_\delta$ is
discarded, $k_\delta$ depends only on $e$ up to its $\delta$-settling
time, and $k_\delta$ is defined through the ratio
$|e_k|/\max_j|e_j|$.}

On the counterexample the guarded counts hold to three digits on all
three windows, at $N_1 = 2.04$ and $N_2 = 2.11$, and
Section~\ref{sec:battery} shows the closed-loop iteration on the same
plant converging cleanly. From here on the guard is part of the index. The failure it excludes is
silent and window-dependent, so it deserves a formal guard rather than
the judgment of the operator. The definition then carries exactly the
dimensionless defaults $(\bar{\mathbf N}, \varepsilon, \delta)$. The rule of
Section~\ref{sec:rule} adds the step $\beta$ and the multiplier box,
which completes Table~\ref{tab:defaults}. None of these defaults was
adjusted for any plant in this paper.

\begin{remark}[Scope of the guard]\label{rem:noise}
Proposition~\ref{prop:wellposed} assumes $e$ is noise-free. With noise
level $\nu$ the index $k_\delta$ is well defined only if
$\delta \max_j |e_j| \gg \nu$; otherwise it follows the noise.
The default $\delta = 0.02$ therefore assumes noise below roughly two
percent of the step, which should be checked and not taken for granted.
In practice, keep tens to a few hundred samples per transient so that
$\Delta e$ and $\Delta^2 e$ are not dominated by quantization, and
difference only from the step onward. \new{Filtering the derivative
channel does not help: the portraits \eqref{eq:plots} difference the
recorded error outside the controller, so the noise reaches them
anyway.}
\end{remark}

\emph{A stability screen.} The guard, and the counts themselves, assume
that the record settles at all; the winding of a diverging error is
merely an artifact of where the recording stopped. Whether it settles
can be decided from the record itself, with no model and no
thresholds.

\begin{definition}[Stability screen]\label{def:screen}
Model the recorded error as the output of two zero-mean Gaussian
generators switched once at an unknown sample, a maximum-likelihood
changepoint test \cite{BassevilleNikiforov1993}. The switch point is
\begin{equation}
\ell^\ast = \arg\max_{\ell}\;
\bigl[-\ell \ln s_1^2(\ell) - \rev{(M{-}\ell{+}1)}\ln s_2^2(\ell)\bigr],
\end{equation}
where $s_1^2$ and $s_2^2$ are the mean squares of $e_0,\dots,e_{\ell-1}$
and of $e_\ell,\dots,e_M$, and $\ell$ ranges over splits leaving at
least $10\%$ of the record on each side. \fix{For a
margin $c > 1$ (default $c = 2$)} the record is \emph{unstable} when
$s_2^2(\ell^\ast) \ge \fix{c\,}s_1^2(\ell^\ast)$, and \emph{stable}
otherwise.
\end{definition}

A settling response has most of its energy early and a diverging one has
most of it late, and the verdict survives scaling of either axis.
Moments are taken about zero, which $K_i > 0$ justifies. A bounded sustained oscillation
\emph{passes} the screen, and this is deliberate: a limit cycle has a
frequency and belongs to the band rules, whereas the screen is reserved
for divergence, which no count can describe. \fix{The margin secures
this, since a constant-amplitude oscillation gives
$s_2^2 \approx s_1^2$ at every split while divergence sends the ratio up
without bound.}

%=====================================================================
\section{The tuning rule}
\label{sec:rule}
%=====================================================================

Order the gains by the frequency band each channel
dominates and write
\begin{equation}
K^{(0)} = K_i, \qquad K^{(1)} = K_p, \qquad K^{(2)} = K_d,
\end{equation}
so that the band number matches the portraits: $N_k$ is the turn index
that points to $K^{(k)}$, the gain that owns band $k$
(Lemma~\ref{lem:own}). Parametrize the gains as multipliers
$F^{(k)} \in [F_{\min}, F_{\max}]$ on any conservative stabilizing
initialization, so that $F^{(0)} = F_i$, $F^{(1)} = F_p$ and
$F^{(2)} = F_d$ in the notation of Table~\ref{tab:rule};
fix a step $\beta$ (Table~\ref{tab:defaults}), and let
$\gamma = 1/(1-\beta)$. Each iteration performs one step test,
evaluates $\mathbf N$, and updates the gains.

Table~\ref{tab:rule} contains the entire decision logic. Manual
commissioning practice can be recognized row by row. The
lower-triangular pattern that gives the rule its name is visible in the
arrows.

\begin{table}[tb]
\centering
\caption{The tuning rule in compact form. The rows are read from top to
bottom, and the first matching row applies, once per step test. Here
``unstable'' is the verdict of Definition~\ref{def:screen}, and
``loops'' means $N_k > \bar N_k$.
$\uparrow$: multiply by $\gamma$;\ $\downarrow$: divide by $\gamma$;\
$\Downarrow^{a}$: divide by $2^{a}$;\
$\cdot$\,: unchanged. Projections onto $[F_{\min},F_{\max}]$ implied.
In the last column, \emph{reset} is the traditional name for integral
action and \emph{rate} for derivative action.}
\label{tab:rule}
\footnotesize
\setlength{\tabcolsep}{4pt}
\begin{tabular}{lcccl}
\toprule
Condition & $F_i$ & $F_p$ & $F_d$ & manual reading \\
\midrule
unstable & $\Downarrow^{1}$ & $\Downarrow^{2}$ & $\Downarrow^{3}$ &
diverging: back off, rate hardest \\
$\Gamma_0$ loops & $\downarrow$ & $\cdot$ & $\cdot$ &
slow cycling: less reset \\
$\Gamma_1$ loops & $\uparrow$ & $\downarrow$ & $\cdot$ &
ringing: less gain \\
$\Gamma_2$ loops & $\uparrow$ & $\uparrow$ & $\downarrow$ &
fast cycling: less rate \\
none & $\uparrow$ & $\uparrow$ & $\uparrow$ &
all quiet: tighten \\
\bottomrule
\end{tabular}
\end{table}

\begin{table}[tb]
\centering
\caption{Constants and defaults, used unchanged on every plant in this
paper.}
\label{tab:defaults}
\begin{tabular}{lll}
\toprule
Symbol & Meaning & Default \\
\midrule
$\bar{\mathbf N}$ & band limits & $(0.5, 0.75, 1.0)$ \\
$\varepsilon$ & truncation radius & $0.1$ \\
$\delta$ & settling band & $0.02$ \\
$\beta$ & step size & $0.1$ \\
\fix{$c$} & \fix{screen margin} & \fix{$2$} \\
\new{$\mathbf a$} & \new{backoff profile} & \new{$(1,2,3)$} \\
box & multiplier range & $[0.001, 10]$ \\
\bottomrule
\end{tabular}
\end{table}

A row that cuts one gain raises the gains below it. This is not
compensation: a band below the fault has just been measured within its
limit in the same experiment, so its margin is evidence, and the
objective requires spending it.

The screen verdict takes precedence. An unstable record backs the
multipliers off, and no counts are evaluated, because the portraits of
a diverging error carry no usable information. The backoff is
deliberately coarse: it halves the multipliers instead of dividing them
by $\gamma$. Instability is a state to be left quickly, not corrected
carefully. The backoff is also \new{graded}: with
$\mathbf a = (1,2,3)$ the multipliers are divided by $2$, $4$ and $8$.

\new{The grading is what keeps the rule from cycling. A uniform backoff
is the exact mirror of the last row of Table~\ref{tab:rule}, and the two
moves cancel: a sluggish loop is expanded until it goes unstable, backed
off uniformly until it is sluggish again, and returned to the same
place, since the ratios $K_i \!:\! K_p \!:\! K_d$ never move. Expansion
is the right response to no evidence, so it is the backoff that must
break the symmetry. Graded, it changes those ratios one way: each
retreat sheds derivative action for integral action until some band
objects. The order is one of cost rather than of blame, which the screen cannot
supply, and derivative action is the cheapest of the three to give up.
The spacing is arithmetic because the band edges move by the
\emph{differences} of $\mathbf a$, so equal differences slide the whole
partition one octave and leave counts comparable across a backoff. Nothing is remembered, and the retreat is never weaker than
a uniform halving in any band, so
Proposition~\ref{prop:mono}(i) continues to describe it.}
For a stable record, define the \emph{violated set} and its lowest
member as
\begin{equation}
V = \{k : N_k > \bar N_k\}, \qquad k_{\min} = \min V,
\end{equation}
with $k_{\min} = 3$ by convention if $V = \emptyset$. The update reads
\begin{equation}
F^{(k)} \;\leftarrow\;
\begin{cases}
\gamma\, F^{(k)}, & k < k_{\min} \ \text{(raise band)},\\[1mm]
F^{(k)}/\gamma, & k = k_{\min} \ \text{(cut band)},\\[1mm]
F^{(k)}, & k > k_{\min} \ \text{(untouched)}.
\end{cases}
\label{eq:triangular}
\end{equation}
When $k_{\min} = 3$ this degenerates to raising all bands.

The rule pursues one objective: the largest gains whose counts stay
within the limits. The tuning should be fast, not merely well damped,
and for a unit load disturbance the integrated error is $1/K_i$, so
raising
$K_i$ as far as the lowest band allows acts directly on what makes a
slow loop slow.

The rows of Table~\ref{tab:rule} are not chosen freely. Because
$k_{\min}$ is the lowest violated band, the bands fall into three states
of knowledge, and each state allows exactly one action.

\begin{itemize}[leftmargin=1.4em, itemsep=1pt, topsep=2pt]
\item $k < k_{\min}$: \emph{measured within limits} in this same
experiment, because $k_{\min}$ is the lowest violated band, and the
objective requires that this margin be used. $\uparrow$
\item $k = k_{\min}$: \emph{identified as the fault}. It is the only
member of $V$ whose violation cannot be leakage
(Proposition~\ref{prop:band}), and reducing its gain is effective
(Proposition~\ref{prop:mono}(ii)). $\downarrow$
\item $k > k_{\min}$: \emph{no evidence either way}. The count may be
genuine, or the mode of band $k_{\min}$ seen through further
differencing. Raising a violated band would worsen the fault; reducing
one within its limit would give up gain for nothing. $\cdot$
\end{itemize}

Reading these three cases gives \eqref{eq:triangular} entry by entry:
below $k_{\min}$ there is a measurement, above it there is none. The
argument fixes only the pattern of signs; the step size and the strength
of the third case remain free, and reducing the unobserved bands too
would serve the same objective at the price of the attribution
(Section~\ref{sec:battery}). The rule therefore acts in sequence, and
Proposition~\ref{prop:mono} guarantees that each stage ends.
In loop-shaping terms the gain profile is pushed up from the
low-frequency end and steps down at the first band that objects: a
lexicographic bandwidth maximizer, and a fixed-point search for the
boundary of the well-damped set
$\mathcal F = \{\theta : \mathbf N(\theta) \le \bar{\mathbf N}\}$. Gains
and multipliers differ only by the fixed initialization, so we write
$\mathcal F$ for the multipliers too.

One fact follows from \eqref{eq:triangular}. In logarithmic gain
coordinates the corrective moves are lower-triangular and unimodular, so
their compositions reach every point of the multiplicative
$\gamma$-lattice, and an unwanted compensation is cancelled by a
correction in a lower band one iteration later.

\begin{proposition}[Monotone steering]\label{prop:mono}
Let Assumption~\ref{ass:plant} hold, and consider an iterate at which
band $k$ is violated. The record is then measurably under-damped,
$\zeta < \bar\zeta_k$, so by Proposition~\ref{prop:pair} the loop lies in
the upper part of its gain range and the critical pair carries the
record. \fix{We argue on the leading term $\hat N(\zeta)$ of \eqref{eq:law},
since the measured count moves in half-turn steps and only the smooth
part is strictly monotone.} Then: (i)~along the uniform-multiplier
direction $CG \mapsto m\,CG$, \fix{$\zeta$ decreases strictly with $m$
and $\hat N$ increases strictly with $m$};
(ii)~\fix{the same holds for $\hat N_k$ in} its own band gain
$K^{(k)}$, at a crossover that its band owns. \fix{The counts
themselves are monotone in a weaker sense, because $N$ and $\hat N$
differ by the bounded term $\eta$ of Proposition~\ref{prop:law}. They
never move against $\hat N$, and they move with it whenever $\hat N$
sweeps a full unit.} The reduction in the corrective row of
\eqref{eq:triangular} therefore \fix{does not increase}
$N_{k_{\min}}$\fix{ and strictly decreases $\hat N_{k_{\min}}$}, and the
compensating increases do not cancel this effect.
\end{proposition}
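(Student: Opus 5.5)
The argument chains three earlier results: Proposition~\ref{prop:pair} supplies the critical pair and its damping $\zeta$, Proposition~\ref{prop:law} turns $\zeta$ into the count through the leading term $\hat N(\zeta)$, and Lemma~\ref{lem:own} transfers the uniform-multiplier statement to a single band gain. First I will fix the regime. If band $k$ is violated, then $N_k>\bar N_k$, and the sign test following Proposition~\ref{prop:law} gives $\zeta<\bar\zeta_k$ up to the endpoint band. I then place the iterate in the neighbourhood of $m^\ast$ where Proposition~\ref{prop:pair} makes the critical pair the least damped. There the decomposition \eqref{eq:mode} is the working model, so \eqref{eq:law} applies with $\hat N(\zeta)=\frac{\ln(1/\rho)}{2\pi}\sqrt{1-\zeta^2}/\zeta$. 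This map is strictly decreasing on $(0,1)$.

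\emph{Part (i).} Along $CG\mapsto m\,CG$, Proposition~\ref{prop:pair} says the pair crosses the imaginary axis transversally at $m^\ast$, with $\zeta(m)$ strictly decreasing to zero. I will make this explicit with the implicit function theorem applied to $1+m\,CG(s)=0$ at $s=j\omega_{180}$. The root is simple because the touch is generic, so the root $s(m)$ is smooth. Moreover $\mathrm d s/\mathrm d m=-CG/(m\,(CG)')$, and the transversality $\partial|mCG|/\partial m>0$ fixes the sign of $\Re\,\mathrm ds/\mathrm dm>0$. Hence $\alpha$ decreases while $\omega$ stays near $\omega_{180}$, so $\zeta=\alpha/\omega_n$ decreases strictly. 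Composing with the strictly decreasing $\hat N$ then gives $\hat N$ strictly increasing in $m$.

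\emph{Part (ii).} For a single gain $K^{(k)}$, the log-derivative of $|CG|$ at $\omega_{180}$ is $\sigma_k(\omega_{180})$ by Lemma~\ref{lem:own}. At a crossover owned by band $k$, meaning $\sigma_k$ is close to $1$ and the other elasticities are small, the perturbation $\partial\,CG/\partial\log K^{(k)}$ equals $CG$ times $\sigma_k$ plus a small phase term. The same implicit-function computation therefore reproduces the sign found in (i). The phase part of the perturbation is $\Im\,\partial\log C/\partial\log K^{(k)}$, which vanishes when $\sigma_k\to1$.

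The \textbf{main obstacle} is making ``owns'' quantitative, that is, showing that the phase perturbation cannot overturn the sign. I will first try to bound it by $\sqrt{1-\sigma_k}$ times a constant and require $\sigma_k$ above a threshold. I will then fall back on stating (ii) for crossovers at which this holds, which matches the hypothesis as worded.

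To pass from $\hat N$ to $N$, I will use $N=\hat N+\eta$ with the residual band of Proposition~\ref{prop:law}. Near $\zeta$, $\eta$ depends only on the cut and start phases, which shift in half-cycle steps, so $N$ is piecewise constant. A monotone sweep of $\hat N$ can therefore only push $N$ across steps in the same direction, and a sweep of a full unit exceeds the width of the $\eta$-band and forces a move.

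Finally, for the rule \eqref{eq:triangular}, the cut divides $K^{(k_{\min})}$ by $\gamma$, which strictly lowers $\hat N_{k_{\min}}$ by (ii). The compensating raises act on bands $j<k_{\min}$. By Lemma~\ref{lem:own} their elasticity at the crossover owned by $k_{\min}$ is $\sigma_j\approx0$, so their first-order effect on $\zeta$ is dominated by the cut. For a small step $\beta$, the net change in $\hat N_{k_{\min}}$ is then negative.
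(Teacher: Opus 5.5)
Your skeleton matches the paper's. You use the chain rule $\partial\hat N/\partial\log m=(\partial\hat N/\partial\zeta)(\partial\zeta/\partial\log m)$, with the first factor from Proposition~\ref{prop:law} and the second from Proposition~\ref{prop:pair}. You then use local dominance of $K^{(k)}$ for (ii), subdominance of the lower bands for the compensating raises, and the bounded $\eta$ for the passage to $N$. The paper only says that the margin shrinks and the damping follows it; you differentiate the root of $1+m\,CG=0$ instead. That is a sharper form of the same step, and it exposes what (ii) must control, which the paper leaves at ``moving one gain shifts its crossover too''.

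Two of your explicit steps fail as written. The first is in (i). The identity $\partial|m\,CG|/\partial m=|CG|>0$ holds at every point, so it cannot decide which way the root moves. With $L=CG$, $L'(j\omega)=-j\,\mathrm dL/\mathrm d\omega$ and $L=-1/m$ at the root,
\[
\Re\frac{\mathrm ds}{\mathrm dm}=-\frac{|L|\,\frac{\mathrm d}{\mathrm d\omega}\arg L}{m^{2}\,\bigl|\frac{\mathrm dL}{\mathrm d\omega}\bigr|^{2}}\qquad\text{at }\omega_{180}.
\]
The sign is therefore that of the phase slope of $CG$, not of $G$. Assumption~\ref{ass:plant} does not supply it, because $\arg C$ rises with $\omega$. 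What supplies it is stability for all $m<m^\ast$: the root reaches the axis from the left, so $\Re\,\mathrm ds/\mathrm dm\ge0$. Strictness is the generic non-tangency $\frac{\mathrm d}{\mathrm d\omega}\arg CG(j\omega_{180})<0$.

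The second is in (ii). Write $\partial\log C/\partial\log K^{(k)}=\sigma_k+j\tau_k$. Your bound $|\tau_k|\le c\sqrt{1-\sigma_k}$ holds only for $k=1$, where $|\tau_1|=\sqrt{\sigma_1(1-\sigma_1)}$. Because $\sigma_0$ and $\sigma_2$ can be negative, $\sigma_k=1$ does not force $\tau_k\to0$. For example, at $K_i/\omega=1$, $K_p=1$, $K_d\omega=2$ one has $(\sigma_0,\sigma_1,\sigma_2)=(-\tfrac12,\tfrac12,1)$, yet $\tau_2=1$. In general $\tau_0=-\sqrt{\sigma_1}\,(K_i/\omega)/|C|$ and $\tau_2=\sqrt{\sigma_1}\,K_d\omega/|C|$. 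Ownership must therefore mean that channel $k$'s term dominates $|C|$, as your parenthetical ``other elasticities small'' says, and not that $\sigma_k$ exceeds a threshold.

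Your own computation gives the exact condition. Reducing $K^{(k)}$ moves the root left if and only if $\sigma_k\,(-\tfrac{\mathrm d}{\mathrm d\omega}\arg CG)>\tau_k\,(-\tfrac{\mathrm d}{\mathrm d\omega}\ln|CG|)$ at $\omega_{180}$. So your constant is a ratio of loop slopes rather than a universal number. For the derivative band, the lead removed with $K_d$ ($\tau_2>0$) works against the cut wherever $|CG|$ is falling.

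Finally, a unit sweep of $\hat N$ forces $N$ to move only for the measured band $-\tfrac12<\eta<\tfrac14$, not for the proven $|\eta|<1$. A bounded, piecewise-constant $\eta$ also does not by itself stop $N$ from stepping against $\hat N$. That requires the cut to advance monotonically as $\zeta$ falls. The paper's proof is equally terse at this point.
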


\begin{proof}
Write $\partial \fix{\hat N}/\partial\log m =
(\partial \fix{\hat N}/\partial\zeta)\,(\partial\zeta/\partial\log m)$.
The first factor is negative by Proposition~\ref{prop:law}. For the
second factor, note that the phase-crossover frequencies of $m\,CG$ do
not move with $m$, because $\arg m\,CG = \arg CG$, while the gain at
each of them grows linearly in $m$. The margin to the first touch of
$-1$ therefore decreases strictly, and near the boundary the damping of
the critical pair follows it (Proposition~\ref{prop:pair}). The product
is positive. For~(ii), the same computation is carried out along the
band's own gain, at the crossover that its mode owns
(Proposition~\ref{prop:band}). The compensating increases act on lower
bands, which are subdominant in $|C(j\omega)|$ at that crossover, so one
step of the rule preserves the sign of the reduction. \fix{The term
$\eta$ is bounded and piecewise constant, so $N$ follows $\hat N$ in
unit jumps and cannot move against it. Part~(ii) is the weaker half: it
rests on the local dominance of $K^{(k)}$, since moving one gain shifts
its crossover too.}
\end{proof}

\begin{remark}[Scope]\label{rem:scope}
The corrective row fires only where a count is violated, which is
exactly the hypothesis of Proposition~\ref{prop:mono}: a violated count
is measured under-damping, and that places the loop in the upper part of
its gain range. Monotonicity is therefore not a further condition to be
checked at each iterate. It is a property of the states in which the rule
uses it, and Assumption~\ref{ass:plant} supplies it there.

Two limits remain. The statement is scalar, about one count, one loop
gain and one crossover; it does not claim that every pole damping is
monotone in every band gain. And strong derivative action can raise the
dominant damping over a middle range of gain, where the critical pair is
captured by the controller zeros. A violated count at low gain would
then fall outside the argument. None occurred on the battery of
Section~\ref{sec:battery}.
\end{remark}

\begin{proposition}[Behavior of the iteration]\label{prop:conv}
The iteration is deterministic without further conditions, and its
iterates remain inside the multiplier box. Assume in addition the
following. Assumption~\ref{ass:plant} holds, so that
Proposition~\ref{prop:mono} is available. The start $\theta^0$ is
accepted by the screen (Definition~\ref{def:screen}). No multiplier
reaches $F_{\min}$ on a violated band during the run. \rev{No condition
on the iterates is needed beyond these, because every corrective move is
taken at a state that satisfies the hypothesis of
Proposition~\ref{prop:mono} (Remark~\ref{rem:scope}).} Then the iteration reaches
\fix{$\partial\mathcal F$ to within one multiplicative step per gain} in
finitely many steps, or it saturates at the box if $\mathcal F$ contains
the box. After that it remains within a \fix{bounded} band around
$\partial\mathcal F$ and \fix{on it} executes a limit cycle.
\end{proposition}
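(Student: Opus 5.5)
The unconditional half comes straight from the construction. Each iteration performs one step test. It then evaluates a deterministic function of the record: the screen of Definition~\ref{def:screen}, and the guarded counts of Definitions~\ref{def:index} and~\ref{def:guard}. Finally it applies either a row of Table~\ref{tab:rule} or the update \eqref{eq:triangular}. The map from multipliers to new multipliers is therefore single-valued, which gives determinism. Every move multiplies or divides a multiplier by $\gamma$ or by $2^{a}$, and the projection onto $[F_{\min},F_{\max}]$ follows it. Hence the box is invariant and the iterates stay inside it by induction.

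For the conditional half, I will use a potential on logarithmic multiplier coordinates $\phi^{(k)} = \log F^{(k)}/\log\gamma$. This places each iterate on a $\gamma$-lattice, up to the backoff moves, which are also integer multiples of $\log 2$. The first step is an \emph{expansion phase}. While $V=\emptyset$ and the record is stable, every coordinate rises by one. Along this ray the loop moves in the uniform-multiplier direction. By Proposition~\ref{prop:pair} a finite $m^\ast$ exists, and by Proposition~\ref{prop:mono}(i) $\zeta$ falls strictly towards zero. The leading term of \eqref{eq:law} therefore eventually exceeds every $\bar N_k$. The phase ends after finitely many steps, either at a violated band or at the screen. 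The alternative is that the box is reached first, which is exactly the saturation case where $\mathcal F$ contains the box. The point where it ends lies within one $\gamma$-step of $\partial\mathcal F$, because the previous iterate was feasible.

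The second step is a \emph{correction phase}, argued lexicographically from the lowest band. While $k_{\min}=k$, the gain $F^{(k)}$ is divided by $\gamma$ at each step. By Proposition~\ref{prop:mono}(ii), $\hat N_k$ decreases strictly, and the compensating raises in lower bands do not cancel this decrease. The counts $N_k$ never move against $\hat N_k$. Since $F^{(k)}$ cannot reach $F_{\min}$ by hypothesis, the number of consecutive cuts is bounded by $\log(F^{(k)}/F_{\min})/\log\gamma$. Thus either band $k$ becomes feasible, or a lower band becomes violated. A lower band becoming violated lowers $k_{\min}$. Since $k_{\min}\in\{0,1,2,3\}$, I will order states by the pair $(k_{\min},\,\text{number of remaining cuts})$ and argue that no infinite monotone run exists.

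Unstable records are handled by Remark~\ref{rem:bootstrap} and the graded backoff. A backoff is at least a uniform halving, so by Proposition~\ref{prop:mono}(i) it returns the loop to the stable side in finitely many backoffs. The start is screened stable, so instability can only be produced by an expansion step from a feasible point. Such a point is already within one step of $\partial\mathcal F$.

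The third step covers the behaviour after the boundary is reached. Every iterate is either feasible and one step inside the boundary, or violated and one step outside it. Each feasible state triggers expansion and each violated state triggers a cut, so the iterates cannot stay strictly on one side. This bounds the distance to $\partial\mathcal F$ by a fixed number of $\gamma$-steps per gain, or of backoff factors $2^{a}$ after a screen event. The set of reachable lattice points within this band is finite, because they lie in the box on a discrete lattice. A deterministic map on a finite set is eventually periodic, and this gives the limit cycle.

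The main obstacle is the correction phase. Proposition~\ref{prop:mono}(ii) is only local, and raising lower bands can create a new violation there. I will need to show that this cannot produce unbounded back-and-forth before the boundary band is reached. To do so I would try the lexicographic potential $\sum_k 4^{-k}\phi^{(k)}$, possibly with the opposite weights, and check whether each corrective move changes it monotonically. If that fails, I would settle for the finite-state periodicity argument, which gives the cycle but not an explicit bound on the transient.
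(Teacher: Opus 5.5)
\textbf{Main gap: you never bound the number of screen events.} You show that each unstable episode ends. Nothing in your plan bounds how \emph{many} episodes occur: expand until the screen fires, back off, expand again, without end.

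Your third step needs such a bound. Without it, the claim that the reachable points ``lie in the box on a discrete lattice'' is false. The log-multipliers then lie in $\mathbb Z\log\gamma+\mathbb Z\log 2$. With $\gamma=10/9$ the ratio $\log 2/\log\gamma$ is irrational, because $10^{q}=2^{p}9^{q}$ forces $q=0$. So this set is dense in the box, not discrete, and finiteness cannot be read off it. That removes your finite state space and your eventual periodicity. It also removes the positive minimum of the strict decreases of $\hat N$ that the correction phase needs.

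You use the backoff only as ``at least a uniform halving''. That argument applies word for word to a uniform backoff, and the paper points out that for a uniform backoff the expand/back-off alternation never ends.

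The missing idea is the grading $\mathbf a=(1,2,3)$:
\begin{itemize}
\item each screen event divides $K_i\!:\!K_p\!:\!K_d$ by $1\!:\!2\!:\!4$;
\item the uniform expansion, the only other move on $\mathcal F$, leaves these ratios fixed and cannot undo the shift;
\item the box bounds how far the ratios can travel.
\end{itemize}
Hence only finitely many screen events occur before a band objects or a multiplier reaches a bound. Only after that are the moves $\gamma^{\pm1}$ alone, and only then do your finite-lattice and periodicity arguments apply.

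\textbf{Smaller points.}
\begin{itemize}
\item Your claim that after the boundary is reached every iterate lies one step inside or outside $\partial\mathcal F$ does not follow. During a cascade the rows of \eqref{eq:triangular} keep raising the bands below $k_{\min}$. The paper says explicitly that the width does not follow from the argument. It bounds the band only by the box and reports two steps per gain as a measurement. The statement asks only for boundedness, so claim only that.
\item Instability need not come from an expansion step at a feasible point. The corrective rows also raise the lower gains by $\gamma$.
\item In the expansion phase, Proposition~\ref{prop:mono}(i) is outside its hypothesis, since no band is violated on $\mathcal F$. You do not need it: the box alone ends that phase.
\end{itemize}

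\textbf{On the obstacle you flagged.} Your order on $(k_{\min},\text{remaining cuts})$ is not monotone. $k_{\min}$ rises whenever a genuine higher violation takes over after the lower one clears. Bare periodicity would also not show that the cycle ever visits $\mathcal F$. The paper closes this step with the cascade argument, using strict decreases of $\hat N$ from Proposition~\ref{prop:mono}.

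Your own candidate potential closes it too, with the opposite weights. Take $\Phi=\sum_k 4^{k}\phi^{(k)}$.
\begin{itemize}
\item $\Phi$ drops by $1$, $3$ and $11$ under the rows for $\Gamma_0$, $\Gamma_1$ and $\Gamma_2$.
\item Projections at $F_{\max}$ only enlarge the drop.
\item The $F_{\min}$ hypothesis keeps the cut from being truncated.
\end{itemize}
So every run of corrective moves is bounded by the range of $\Phi$ on the box. This is purely combinatorial and needs no appeal to Proposition~\ref{prop:mono}. It still presupposes finitely many screen events, so the gap above remains the essential one.
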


\begin{proof}
Boundedness follows from the construction of the projections. On
$\mathcal F$ we have $V = \emptyset$, and the rule multiplies all gains
by $\gamma > 1$. This is an expanding map, so it leaves any compact
subset of the interior in finitely many steps, or it saturates at
$F_{\max}$. Once $V \ne \emptyset$, the corrective form of the rule
reduces the lowest violated index, and the decrease is strict \fix{in
$\hat N$} by Proposition~\ref{prop:mono}. \new{Screen events are finite
in number: each divides $K_i \!:\! K_p \!:\! K_d$ by
$1 \!:\! 2 \!:\! 4$, while the uniform expansion, the only other move
available on $\mathcal F$, leaves those ratios unchanged and cannot undo
the change. The ratios therefore never revisit a value, and the box
bounds how far they may travel, so after finitely many screen events
either a band objects or a multiplier reaches its bound, the detectable
outcome excluded by hypothesis. Thereafter the moves are
$\gamma^{\pm1}$ only, the iterates live on a finite multiplicative
lattice inside the box, and the strict decreases over the finitely many
visited states have a positive minimum.} The $F_{\min}$ hypothesis
ensures that the reduction is never blocked at the bound. When that
index clears, either the leaked violations above it clear with it, or a
genuine one becomes the new $k_{\min}$ and is reduced in its turn.
Feasibility is therefore restored in finitely many steps. \rev{On the
same finite lattice the deterministic iteration is eventually
periodic,} and the alternation of expansion and correction confines the
\rev{cycle} to \fix{a band bounded by the box. Its width does not
follow from this argument, because the bands below $k_{\min}$ are raised
until a lower violation appears; on the battery of
Section~\ref{sec:battery} the excursion is at most two multiplicative
steps per gain, which we report as a measurement}.
\end{proof}

\begin{remark}[Bootstrap from arbitrary gains]\label{rem:bootstrap}
With the screen row, a stabilizing initialization is a convenience and
not a precondition. Plants of Assumption~\ref{ass:plant} are open-loop
stable, so successive backoffs reach a region the screen accepts within
a few experiments, or else reach $F_{\min}$, a detectable outcome;
after that Proposition~\ref{prop:conv} applies unchanged. \new{The
grading shortens the bootstrap as well as the recovery, dividing the
loop gain at the top of the range by $8$ rather than $2$, which is
where a diverging record usually has its energy.} Each backoff costs
one experiment on an unstable loop, so a conservative start remains
good practice.
\end{remark}

\emph{Termination.} No stopping rule is needed. The expanding form is
active exactly on $\mathcal F$, so the iteration never rests; by
Proposition~\ref{prop:conv} it ends in a limit cycle around
$\partial\mathcal F$ whose controllers remain usable. It may be left
running or stopped at any iterate, and Section~\ref{sec:battery} quotes
the last feasible one. A step test aborts, with rollback, if $|e|$ exceeds a configured
multiple of the step: a site safety setting, not a constant of the
method. The gains change with the loop
closed, so the controller is realized in velocity form
\cite{AstromHagglundBook2006}, which absorbs a gain change without a
discontinuity in $u$ and without reinitializing any state. This form is
assumed throughout Section~\ref{sec:battery}. In a real installation the $K_d$ term is
best taken on the measurement rather than on the error, to avoid the
derivative impulse at a setpoint step.

%=====================================================================
\section{Validation on a plant battery}
\label{sec:battery}
%=====================================================================

\begin{figure*}[!t]
\centering
\includegraphics[width=0.52\textwidth]{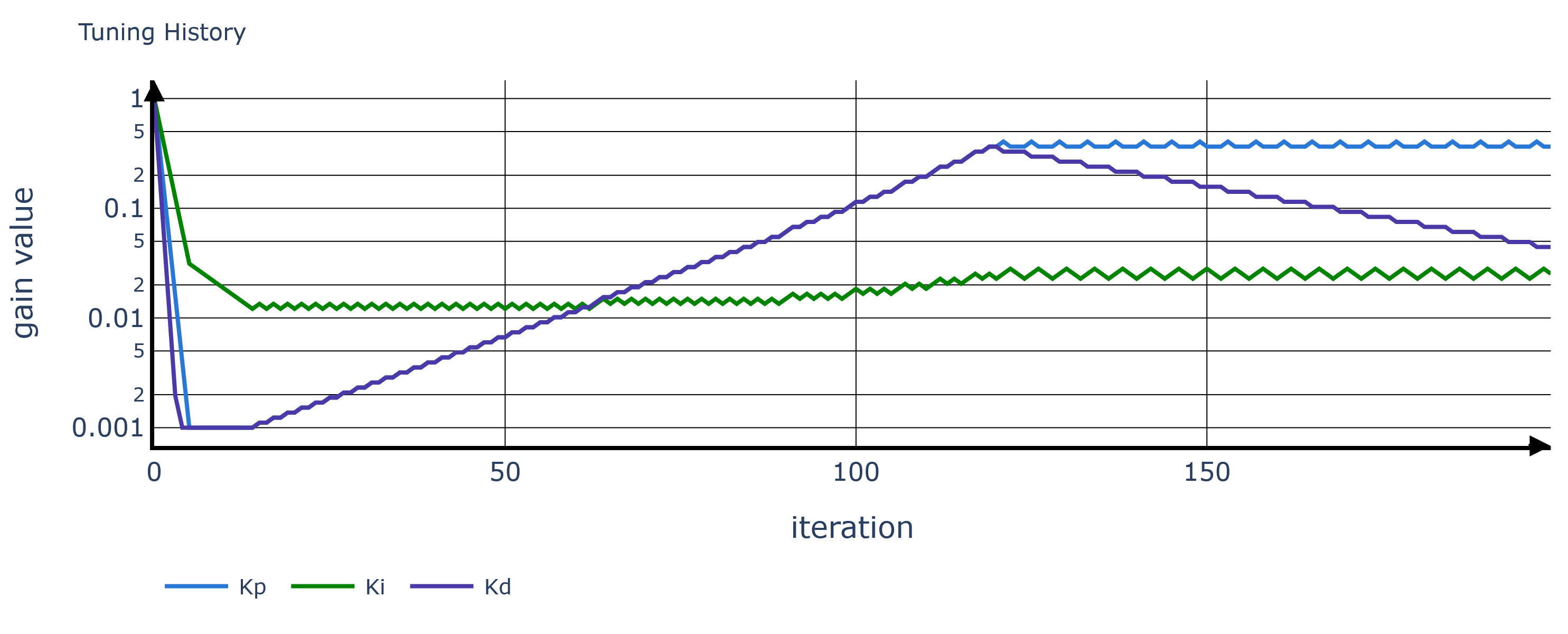}
\caption{P2, single-band rule: gain trajectories $\new{(K_i, K_p, K_d)}$
against iteration, on a log scale; the demonstration run of
Figures~\ref{fig:init_step} and~\ref{fig:init_portraits}. The start is
not stabilizing, so the screen row carries it: five graded backoffs
(iterations $1$--$5$) drive $K_p$ and $K_d$ to the lower bound $0.001$
and settle the record, and band-$0$ corrections restore feasibility by
iteration $15$ (Remark~\ref{rem:bootstrap}). $K_p$ and $K_i$ then enter
the boundary limit cycle of Proposition~\ref{prop:conv}, visible as a
ripple of one multiplicative step. From iteration $120$ the $\Gamma_2$
row reduces $K_d$ again, one step per cycle: on this delay-dominant
plant the derivative action gained during the rise adds no damping at
the boundary, so the rule gives it up.}
\label{fig:limitcycle}
\end{figure*}

\spin was run on four process-typical plants, covering lag-dominant,
balanced, delay-dominant and high-order dynamics:
\begin{align*}
\text{P1 (lag-dominant):} &\quad \tfrac{e^{-s}}{(10s+1)(s+1)^3}, \\
\text{P2 (balanced):} &\quad \tfrac{1.25\,e^{-8s}}{(5s+1)^4}, \\
\text{P3 (delay-dominant):} &\quad \tfrac{e^{-10s}}{(2s+1)(s+1)^3}, \\
\text{P4 (high-order slow):} &\quad \tfrac{e^{-4s}}{(8s+1)^6}.
\end{align*}
These four cases follow the process classification used in
\cite{AstromHagglund2004} to organize achievable PID performance.
Relative dead time is treated there as the main axis of difficulty for
a fixed PID structure. Let $L_a$ and $T_a$ be the apparent dead time and
the apparent time constant of a two-point fit of the open-loop step
response, and let $\kappa = L_a/(L_a{+}T_a)$ be the relative dead time.
Plants P1--P4 cover this axis: from lag-dominant ($\kappa \approx 0.28$,
P1) through balanced ($\kappa \approx 0.58$, P2) to strongly
delay-dominant ($\kappa \approx 0.79$, P3). Plant P4
($\kappa \approx 0.54$) instead tests a high model order at a moderate
relative dead time.

Of the dynamics Assumption~\ref{ass:plant} excludes, strong
non-minimum-phase behavior is rare in the process industries while
integrating plants are not; extending the method to them is left open
\cite{Desborough2002}. \fix{All four plants here are lag chains
with dead time and $n\ge 4$, so Lemma~\ref{lem:class} applies directly
and Assumption~\ref{ass:plant} holds} (Table~\ref{tab:battery}).

All runs use the same constants: $\bar{\mathbf N} = (0.5, 0.75, 1.0)$,
$\varepsilon = 0.1$, $\delta = 0.02$, $\beta = 0.1$, and the box
$[0.001, 10]$. The simulated plant only generates step responses; the
tuner sees the recorded error alone, exactly as it would on hardware.
\new{The reference implementation filters the derivative channel with
time constant $K_d/10$, as any implementation must; the theory above is
stated for the ideal channel of Assumption~\ref{ass:pid}.}
The derivative acts on the error, the actuator limits are $\pm 10$
under anti-windup, and each noise-free record has $500$ samples spanning
ten times the apparent time constant $T_a$. The demonstration run
of Figures~\ref{fig:init_step}--\ref{fig:limitcycle} instead takes the
derivative on the measurement, as recommended in
Section~\ref{sec:rule}; the portraits are formed from the recorded
error either way. The initial gains came from an AMIGO design
\cite{AstromHagglund2004}, computed on a two-point FOPTD fit of each
plant. The proportional and integral gains were then halved and the
derivative gain tripled: a deliberately mis-tuned start that exercises
the rule at every band position. Each run spans $200$ iterations.

\begin{table}[t]
\centering\footnotesize
\caption{Plant battery: initial and returned turn indices, and returned
multipliers, over runs of $200$ iterations. The returned tuning is the
last feasible iterate, reached at iterations $198$, $193$, $192$ and
$193$ respectively. The constants are identical throughout and the rule
is \eqref{eq:triangular}. No run fired the screen.}
\label{tab:battery}
\setlength{\tabcolsep}{3.5pt}
\begin{tabular}{@{}llll@{}}
\toprule
Plant & $\mathbf N$ initial & $\mathbf N$ returned & $(F_i, F_p, F_d)$ \\
\midrule
P1 &
$(0.62, 0.42, 1.42)$ & $(0.11, 0.02, -0.13)$ & $(2.09, 2.87, 0.53)$ \\
P2 &
$(0.12, -0.03, 2.06)$ & $(0.12, 0.07, 0.51)$ & $(2.32, 2.87, 0.59)$ \\
P3 &
$(0.16, -0.14, 7.93)$ & $(0.12, 0.08, 0.65)$ & $(2.58, 2.87, 0.73)$ \\
P4 &
$(0.12, 0.01, 2.04)$ & $(0.12, 0.10, 0.33)$ & $(2.32, 2.32, 0.38)$ \\
\bottomrule
\end{tabular}
\end{table}

Three observations follow. 1)~On P2, P3 and P4 the tripled derivative
creates a fast parasitic mode. The raw response does not show it, but
$\Gamma_2$ does: there $N_2 = 2.06$, $7.93$ and $2.04$, while both
lower bands stay within their limits, so the attribution is
unambiguous. The exchange of derivative gain for proportional and
integral gain is still visible in the returned multipliers: on P2,
$F_d = 0.59$ against $F_i = 2.32$ and $F_p = 2.87$.
2)~Every run ends in the limit cycle of Proposition~\ref{prop:conv},
and feasibility is revisited within a few iterations of the cap of
$200$, the last feasible iterates being $192$--$198$.
3)~The rule uses its margin where the plant allows: the integral
multiplier is more than doubled on every plant, the derivative
multiplier is cut on every plant, and no multiplier lies at a box
bound. The returned counts are well inside the limits, which is where
the alternation of Proposition~\ref{prop:conv} leaves the last feasible
iterate, and not a claim of tightness.

The leakage asymmetry also appears in the runs. At iteration $20$ of
the P3 run, the rise has violated all three bands, with
$\mathbf N = (1.62,\, 1.43,\, 4.20)$. A single reduction of band~$0$,
the only band whose fault is certain, returns
$(0.12,\, -0.03,\, 2.72)$. The violation in band~$0$ clears because its
fault was repaired; the one in band~$1$ clears with it, having been
leakage from the same low-frequency mode; the genuine high-band
violation remains and becomes the next $k_{\min}$. A variant that
reduces every violated band at once would have changed three channels
for this single fault, and on this battery it reaches comparable
tunings. We keep single-band action for the attribution it preserves,
not because a penalty was measured.

%=====================================================================
\section{Limitations}
\label{sec:limits}
%=====================================================================

The claim that the oscillation of a band can be cured by reducing the
gain of that band is derived, and not assumed
(Proposition~\ref{prop:mono}, Remark~\ref{rem:scope}). Several limits
remain.

1)~\fix{A plant outside Assumption~\ref{ass:plant} breaks this claim,
but it does so visibly and safely, because the iteration then stops at a
bound. The sign of $N_k$ is not used anywhere; small negative counts, as
for P2 in Table~\ref{tab:battery}, are endpoint artifacts.}

2)~Noise enters the differenced coordinates directly. The guard
excludes settled noise under Remark~\ref{rem:noise}, but it does not
exclude broadband noise inside the transient. Noisy installations need
filtering, which lies outside the scope of this paper.

3)~Each iteration needs one settled window; a disturbance corrupts one
test, not the procedure. 4)~Single-band action is the price of
attribution. A stable start that
violates all bands converges through a sequence of single corrections.
Take P4, started from three, two and four times the initial $K_i$,
$K_p$ and $K_d$ of Table~\ref{tab:battery}. All three bands are
violated at the start, at $\mathbf N = (3.91, 4.02, 4.01)$, and first
feasibility takes $32$ iterations. The \new{graded} backoff makes no
attribution \new{and assumes the top of the range}, and the screen
reserves it for divergence.

5)~Leakage can misdirect a correction when the true source sits just
below its own threshold; the rule then corrects itself within a few
iterations.

6)~A multiplier resting at a bound is an outcome to report rather than
tune through, and the counts tell the cases apart. A multiplier at
$F_{\min}$ while its band still rings blocks the cascade: the violation
of $k_{\min}$ never clears, so no band above it is ever examined, and
for $k_{\min} = 0$ the iteration stops at an infeasible fixed point,
which no run reported here reached.
Proposition~\ref{prop:conv} then gives determinism and boundedness but
not convergence, because its hypothesis excludes this saturation.
Widening the box is the first remedy, although no box can be certified
wide enough in advance. \new{Multipliers at $F_{\max}$ with every count
quiet carry the opposite message: not a box that is too small, but a
plant with no finite stability boundary
(Remark~\ref{rem:ladder}), for which no box is wide enough. P4 from
unit gains rests $K_d$ at the ceiling with every band within limits,
which is this benign case.}

The destination is a fast, well-damped tuning and not an optimum; where
a particular criterion matters, it is a sound starting point for
refinement against it.

%=====================================================================
\section{Conclusion}
%=====================================================================

Three portraits make under-damping visible band by band, a winding
count turns it into a number, a settling-anchored guard makes that
number independent of the observation window, and one triangular rule,
reducing only the band whose fault is certain, turns it into a
decision. The turn indices carry no units and no clock, so fixed
dimensionless constants serve every plant tested. The iteration is
deterministic and bounded without further conditions. A violated count
is itself the evidence that the loop is under-damped, which is the
condition under which reducing that band helps, so the plant class
supplies the monotonicity wherever the rule corrects a gain. Under it
the iteration reaches a bounded neighbourhood of the boundary of the
well-damped set.

\new{\spin} needs no model, and it shows how it reaches its result: an
operator who watches the three portraits sees every decision the
algorithm makes. More generally, a closed-loop record can be read
through invariants of its trajectory instead of through an identified
model, and tuning then becomes the problem of driving three winding
numbers below three fixed fractions of a turn. \rev{The setting here is
the plainest practical one: a fixed step, a noise-free analysis, and
one set of defaults. What the iteration seeks asymptotically, how a
diminishing step would reach it, and how measurement noise should be
handled are left for follow-up work.}

\section*{CRediT authorship contribution statement}
\textbf{Daniel Pachner:} Conceptualization, Methodology, Software,
Writing -- review \& editing. \textbf{Pavel Otta:} Software, Validation,
Visualization, Writing -- original draft. \textbf{Ji\v{r}\'{\i}
Dost\'al:} Investigation, Writing -- review \& editing.
\textbf{Vladim\'{\i}r Havlena:} Supervision, Writing -- review \&
editing.

\section*{Declaration of competing interest}
The authors declare that they have no known competing financial
interests or personal relationships that could have appeared to
influence the work reported in this paper.

\section*{Data availability}
Reference implementation of \spin, with the interactive demonstration
that reproduces results reported here:

\noindent{\raggedright\url{https://robopid-simulator.uceeb.cvut.cz/}\par}

% --- Elsevier requires this section at the end of the manuscript,
% --- immediately before the reference list. Edit the tool name and the
% --- reason so that they describe what was actually done; the
% --- declaration does not cover basic spelling, grammar or reference
% --- checkers, and it does not cover AI tools used for data analysis or
% --- for producing the simulation code.
\section*{Declaration of generative AI and AI-assisted technologies in
the manuscript preparation process}
During the preparation of this work the authors used Anthropic Claude in
order to improve the readability and the language of the manuscript and
to check the consistency of the notation. After using this tool, the
authors reviewed and edited the content as needed and take full
responsibility for the content of the published article.

\bibliographystyle{elsarticle-num}

\end{document}